\newcommand{\RR}{\mathbb{R}}
\newcommand{\BB}{\mathbb{B}}
\newcommand{\Tr}{\operatorname{Tr}}
\newcommand{\prox}{\operatorname{prox}}
\newcommand{\co}{\operatorname{co}}
\newcommand{\argmin}{\operatornamewithlimits{argmin}}
\newcommand{\rmd}{\mathrm{d}}
\renewcommand{\vec}[1]{\mathbf{#1}}
\newcommand{\dom}{\operatorname{dom}}
\newtheorem{definition}{Definition}
\newtheorem{theorem}{Theorem}
\renewcommand{\restriction}{\mathord{\upharpoonright}}
\begin{document}

\bibliographystyle{unsrt} 

\title{Differentiable but exact formulation
  of density-functional theory}

\author{Simen Kvaal}
\email{simen.kvaal@kjemi.uio.no}
\affiliation{Centre for Theoretical and Computational Chemistry, Department of Chemistry, University of Oslo, P.O. Box 1033 Blindern, N-0315 Oslo, Norway}
\author{Ulf Ekstr\"om}
\affiliation{Centre for Theoretical and Computational Chemistry, Department of Chemistry, University of Oslo, P.O. Box 1033 Blindern, N-0315 Oslo, Norway}
\author{Andrew M. Teale}
\affiliation{School of Chemistry, University of Nottingham, University Park, Nottingham, NG7 2RD, UK}
\affiliation{Centre for Theoretical and Computational Chemistry, Department of Chemistry, University of Oslo, P.O. Box 1033 Blindern, N-0315 Oslo, Norway}
\author{Trygve Helgaker}
\affiliation{Centre for Theoretical and Computational Chemistry, Department of Chemistry, University of Oslo, P.O. Box 1033 Blindern, N-0315 Oslo, Norway}
\begin{abstract}
  The universal density functional $F$ of density-functional theory is a complicated and ill-behaved function of the 
  density---in particular, $F$ is not differentiable, making
  many formal manipulations more complicated. Whilst $F$ has been well
  characterized in terms of convex analysis as forming a conjugate
  pair $(E,F)$ with the ground-state energy $E$ via the
  Hohenberg--Kohn and Lieb variation principles, $F$ is nondifferentiable and 
  subdifferentiable only on a small (but dense) set of its domain. In this article, we apply a tool from convex
  analysis, Moreau--Yosida regularization, to construct, for any
  $\epsilon>0$, pairs of conjugate functionals
  $({}^\epsilon\!E,{}^\epsilon\! F)$ that converge to
  $(E,F)$ pointwise everywhere as $\epsilon\rightarrow 0^+$, and such that ${}^\epsilon\!
  F$ is (Fr\'echet) differentiable. For technical reasons, we
  limit our attention to molecular electronic systems in a finite but 
  large box.
  It is noteworthy that no information is lost in the Moreau--Yosida regularization: 
  the physical ground-state  energy $E(v)$ is exactly
  recoverable from the regularized ground-state energy ${}^\epsilon\!E(v)$ in a simple way. All concepts
  and results pertaining to the original $(E,F)$ pair have
  direct counterparts in results for $({}^\epsilon\! E, {}^\epsilon\!
  F)$. The Moreau--Yosida regularization therefore allows for an exact,
  differentiable formulation of density-functional theory.
  In particular, taking advantage of the differentiability of ${}^\epsilon\!F$, a rigorous formulation of 
  Kohn--Sham theory is presented that does not suffer from the noninteracting representability problem in standard Kohn--Sham theory. 
\end{abstract}
\maketitle
\section{Introduction}

Modern density-functional theory (DFT) was introduced by Hohenberg and Kohn 
in a classic paper~\cite{Hohenberg1964} and is now the workhorse of quantum
chemistry and other fields of quantum physics. 
Subsequently, DFT was put on a mathematically firm ground by Lieb using convex analysis.\cite{Lieb1983} 
The central quantity of DFT is the universal density functional $F(\rho)$, which represents the electronic
energy of the system consistent with a given density $\rho$. Clearly, the success of DFT hinges on the 
modelling of $F$, an extremely complicated function of the electron density. 
It is an interesting observation that, over the last two or three decades, $F$ has been modelled sufficiently accurately to make DFT
the most widely applied method of quantum chemistry, 
in spite of the fact that 
Schuch and Verstraete~\cite{Schuch2009} have shown 
how considerations from the field of computational complexity place
fundamental limits on exact DFT: if $F(\rho)$ could be found
efficiently, all NP hard problems would be solvable in polynomial
time, which is highly unlikely.\cite{Garey1979} 

From a mathematical point of view, DFT is neatly formulated using
convex analysis\cite{Lieb1983}: The universal density functional $F(\rho)$ and the
ground-state energy $E(v)$ are related by a conjugation operation,
with the density $\rho$ and external potential $v$ being elements of a
certain Banach space $X$ and its dual $X^*$, respectively. The
functionals $F$ and $E$ are equivalent in the sense that they contain
the same information---each can be generated exactly from the other.

The universal density functional $F$ is convex and lower semi-continuous but otherwise
highly irregular and ill behaved. Importantly, $F$ is everywhere
discontinuous and not differentiable in any sense that justifies
taking the functional derivative in formal expressions---even for the $v$-representable densities,
as pointed out by Lammert.\cite{Lammert2005}
For example, it is common practice to formally differentiate $F$ with respect to the
density, interpreting the functional derivative ``$-\delta
F(\rho)/\delta\rho(\mathbf r)$'' as a scalar potential at~$\mathbf
r$. However, this derivative, a G\^ateaux derivative, does not
exist. 

Together with the problem of $v$-representability, conventional DFT is
riddled with mathematically unfounded assumptions that are, in fact,
probably false. For example, conventional Kohn--Sham theory assumes,
in addition to differentiability of $F$, that, if $\rho$ is $v$-representable for an interacting $N$-electron
system, then $\rho$ is also $v$-representable for the corresponding noninteracting
system.\cite{KohnSham1965} While providing excellent predictive results with modelled
approximate density functionals, it is, from a mathematical
perspective, unclear why Kohn--Sham DFT works at all.

It is the goal of this article to remedy this situation by
introducing a family of regularized DFTs based on a tool from convex
analysis known as the \emph{Moreau envelope} or \emph{Moreau--Yosida
  regularization}. For $\epsilon>0$, the idea is to 
introduce a regularized energy functional ${}^\epsilon\! E$ related
to the usual ground-state energy $E$ by
\begin{equation}
  {}^\epsilon\! E(v) = E(v) - \frac{1}{2}\epsilon\|v\|_2^2,\label{eq:introreg}
\end{equation}
where $\|\cdot\|$ is the usual $L^2$-norm.
The convex conjugate of $^\epsilon\! E$ is the Moreau envelope $^{\epsilon}\!F$ of $F$,
from which the regularized ground-state energy can be obtained 
by a Hohenberg-Kohn minimization over densities: 
\begin{equation}
{^\epsilon}\!E(v) = \inf_{\rho} \left( {^\epsilon}\!F(\rho) + (v \vert \rho) \right).
\end{equation}
where $( v \vert \rho)= \int \! v(\vec{r})\rho(\vec{r})\mathrm d \vec{r}$. The usual Hohenberg--Kohn variation principle
is recovered as $\epsilon\rightarrow 0^+$. Importantly,
the Moreau envelope $^\epsilon\!F(\rho)$ is
\emph{everywhere differentiable} and converges pointwise from below to
$F(\rho)$ as $\epsilon\rightarrow 0^+$. We use the term ``regularized'' for both 
$^\epsilon\! E$ and $^\epsilon\! F$, although it is $^\epsilon\! F$ 
that, as will be shown below, becomes differentiable through the procedure.

A remark regarding the Banach spaces of densities and potentials is here in order. If $v$ is a Coulomb potential, then the regularization term in
Eq.\;\eqref{eq:introreg} becomes infinite. Moreover, the strongest results
concerning the Moreau--Yosida
regularization are obtained in a reflexive setting. The usual
Banach spaces $X = L^1(\RR^3)\cap L^3(\RR^3)$ and $X^* =
L^{3/2}(\RR^3)+L^\infty(\RR^3)$ for densities and potentials,
respectively,\cite{Lieb1983} are therefore abandoned, and both
replaced with the Hilbert space $L^2(\BB_\ell)$, where $\BB_\ell =
[-\ell/2,\ell/2]^3$ is an arbitrarily large but finite box in $\RR^3$. As is
well known, domain truncation represents a well-behaved approximation: as $\ell$
increases, all
eigenvalues converge to the $\RR^3$-limit. Moreover, the continuous spectrum is approximated by
an increasing number of eigenvalues whose spacing converges to zero.  

We observe that, in the box,
the difference $E(v)-{}^\epsilon\!E(v) = \frac{1}{2}\epsilon \Vert v \Vert_2^2$ is
arbitrarily small and \emph{explicitly known}---it
does not relate to the electronic structure of the system and is easily calculated from $v$. 
Nothing is therefore lost in the transition from $(E,F)$ to
$({}^\epsilon\!E,{}^\epsilon\!F)$. On the contrary, we obtain a
structurally simpler theory that allows taking the derivative of
expressions involving the universal functional. Moreover, the 
differentiability of ${}^\epsilon\! F$ implies $v$-representability
of any $\rho$, for noninteracting as well as interacting systems,
as needed for a rigorous formulation of Kohn--Sham theory.
In this paper, we explore the Moreau envelope as applied to DFT, 
demonstrating how every concept of standard DFT has a counterpart
in the Moreau-regularized formulation of DFT and vice versa.

The remainder of the article is organized as follows:
In Sec.\;\ref{sec:prelim}, we review formal DFT and discuss the
regularity issues of the universal functional within the nonreflexive Banach-space setting 
of Lieb.\cite{Lieb1983} 
In preparation for the Moreau--Yosida regularization, we next reformulate
DFT in a truncated domain, introducing the Hilbert space $L^2(\BB_\ell)$
as density and potential space.

The Moreau--Yosida regularization is
a standard technique of convex analysis, applicable to any convex
function such as the universal density functional. We introduce this regularization
in Sec.\;\ref{sec:my}, reviewing its basic mathematical properties. To establish notation, a review of
convex analysis is given in the Appendix; for  
a good textbook of convex analysis in a Hilbert space, with an in-depth
discussion of the Moreau--Yosida regularization, see
Ref.~\onlinecite{BauschkeAndCombettes}.

Following the introduction of the Moreau--Yosida regularization, 
we apply it to DFT in Sec.\;\ref{sec:myDFT} 
and subsequently to Kohn--Sham theory in Sec.\;\ref{sec:ks}. Finally, Sec.\;\ref{sec:conclusion} contains some concluding remarks.

\section{Preliminaries}
\label{sec:prelim}

\subsection{Formal DFT}
\label{sec:formal-dft}

In DFT, we express the Born--Oppenheimer ground-state problem of an $N$-electron 
system in the external
electrostatic potential $v(\vec{r})$ as a problem referring only to the one-electron density $\rho(\vec{r})$. 
The Born--Oppenheimer $N$-electron molecular Hamiltonian is given by
\begin{equation}\label{eq:ham}
  H_\lambda(v) = \hat{T} + \lambda \hat{W} + \hat{v},
\end{equation}
where $\hat{T}$ and $\hat{W}$ are the kinetic-energy and electron-electron 
repulsion operators, respectively, while $\hat{v}$ is a
multiplicative $N$-electron operator corresponding to the scalar potential
$v(\vec{r})$. The scalar $\lambda$ is introduced 
to distinguish between the interacting ($\lambda=1$) and
noninteracting ($\lambda=0$) systems.

By Levy's constrained-search argument,\cite{Levy1979} 
the (fully interacting) ground-state energy,
\begin{equation}
E(v) = \inf_\Psi \left\langle \Psi \vert H_1(v) \vert \Psi \right\rangle,
\label{eq:Ev}
\end{equation}
can be written in the form of a Hohenberg--Kohn variation principle,
\begin{equation}
  E(v) = \inf_{\rho\in \mathcal{I}_N} \left( F(\rho) + (v \vert \rho)  \right),
  \label{eq:E_of_v_dft}
\end{equation}
where $\mathcal{I}_N$ is the set of $N$-representable densities---that is,
$\rho\in\mathcal{I}_N$ if and only if there exists a normalized
$N$-electron wave function with finite kinetic energy and 
density $\rho$. In Eq.\;\eqref{eq:Ev}, the infimum extends over all
properly symmetrized and normalized $\Psi\in H^1(\RR^{3N})$, the
first-order Sobolev space consisting of those functions in 
$L^2(\RR^{3N})$ that have first-order derivatives also in
$L^2(\RR^{3N})$ and therefore have a finite kinetic energy.

Different universal density functionals $F$ can be used 
in~Eq.\;(\ref{eq:E_of_v_dft}),
the only requirement of an admissible functional being that the
correct ground-state energy $E(v)$ is recovered. 

Given that $\int \! \rho (\mathbf r)\mathrm d \mathbf r= N$, it follows that $\mathcal{I}_N\subset
L^1(\RR^3)$. As demonstrated by Lieb in Ref.~\onlinecite{Lieb1983}, the universal density functional $F$ 
can be chosen as a unique
lower semi-continuous convex function with respect to the $L^1(\RR^3)$
topology. (By definition, therefore, $F(\rho) = +\infty$ for any
$\rho\notin\mathcal{I}_N$; see Appendix~A for remarks on extended-valued functions.)
Moreover, by a Sobolev inequality,\cite{Lieb1983} we may embed the
$N$-representable densities in the Banach space $X = L^1(\RR^3) \cap
L^3(\RR^3)$, with norm $\|\cdot\|_X = \|\cdot\|_{L^1} +
\|\cdot\|_{L^3}$ and topological dual $X^* = L^\infty(\RR^3) +
L^{3/2}(\RR^3)$. 
Given that this Banach space $X$ has a stronger topology than $L^1(\mathbb R^3)$, 
a convergent sequence in $X$ converges also in $L^1$. 
From the lower semi-continuity of $F$ in $L^1(\mathbb R^3)$, we then obtain
\begin{align}
  \|\rho_n-\rho\|_X\rightarrow 0 &\Rightarrow
  \|\rho_n-\rho\|_1\rightarrow 0 \nonumber \\ &\Rightarrow
  \liminf_n F(\rho_n) \geq F(\rho),\label{eq:lsc-stronger-top}
\end{align}
implying that $F$ is lower semi-continuous also in the topology of $X$.
We note that the choice $X = L^1 \cap L^3$ is not unique, but it has the virtue that all Coulomb potentials are contained in $X^*$.

On the chosen Banach spaces, the (concave and continuous) 
ground-state energy $E: X^\ast \to \RR\cup\{-\infty\}$ and the (convex and lower semi-continuous) 
universal density functional $F: X \to \RR\cup\{+\infty\}$ are related by the variation principles 
\begin{subequations}
  \begin{alignat}{2}
    E(v) &= \inf_{\rho\in X} \left( F(\rho) + ( v \vert \rho) \right), &\quad v &\in X^* \label{eq:E_of_v2}, \\
    F(\rho) &= \sup_{v\in X^*} \left( E(v) - ( v \vert \rho) \right), &\quad \rho&\in X.
    \label{eq:F_of_rho}
  \end{alignat}
\end{subequations}
In the terminology  of convex analysis (see Appendix A), $\rho\mapsto
F(\rho)$ and $v\mapsto -E(-v)$ are each other's convex Fenchel conjugates.
To reflect the nonsymmetric relationship between $E$ and $F$ 
in Eqs.\;\eqref{eq:E_of_v2} and~\eqref{eq:F_of_rho}, we introduce the
nonstandard but useful mnemonic notation
\begin{subequations}
\begin{align}
F &= E^\vee,  \\
E &= F^\wedge,
\end{align}
\end{subequations}
which is suggestive of the ``shape'' of the {resulting} functions: 
$F^\wedge = E$ is concave, whereas $E^\vee = F$ is convex. 

The density functional $F$ in~Eq.\;\eqref{eq:F_of_rho} is an extension of the universal functional
$F_\text{HK}$ derived by Hohenberg and Kohn,\cite{Hohenberg1964} 
the latter functional having from our perspective the problem that it
is defined only for ground-state densities 
($v$-representable densities) in $\mathcal A_N$, an implicitly defined set that we do
not know how to characterize. 

It can be shown that the functional $F$ defined by Eq.\;\eqref{eq:F_of_rho} is
identical to the constrained-search functional~\cite{Lieb1983}  
$F(\rho) = \inf_{\Gamma
  \mapsto \rho} \Tr(\hat{T}+\lambda\hat{W})\Gamma$, where the
minimization is over all ensemble density matrices $\Gamma$ corresponding to a density $\rho$, constructed from
$N$-electron wave functions with a finite kinetic energy.
A related functional is the (nonconvex) Levy--Lieb constrained
search functional,\cite{Levy1979} 
$F_\text{LL}(\rho) = \inf_{\Psi \mapsto\rho}
\langle \Psi \vert (\hat{T}+\lambda\hat{W}) \vert \Psi \rangle $,
obtained by minimizing over pure states only.  
In any case, Eq.\;\eqref{eq:F_of_rho} defines the \emph{unique} lower
semi-continuous, convex universal functional such that $F =
(F^\wedge)^\vee$. In fact, any $\bar F$ that satisfies the
condition $({\bar F}^\wedge)^\vee = F$ is an admissible density
functional. In particular, $F_\text{LL}$ and $F_\text{HK}$ are
both admissible, satisfying this requirement when extended from their
domains ($\mathcal{I}_N$ and $\mathcal{A}_N$, respectively) to all of $X$ by
setting them equal to $+\infty$ elsewhere. 

\subsection{Nondifferentiability of $F$}
\label{sec:nonreg}

The Hohenberg--Kohn variation principle in Eq.\;\eqref{eq:E_of_v_dft} is appealing, 
reducing the $N$-electron problem to a problem referring only to one-electron
densities. However, as discussed in the introduction, $F$ is a complicated function. 
In particular, here we consider its nondifferentiability. 

The G\^ateaux derivative is closely related to the
notion of directional derivatives, see Appendix~A. A function $F$ is G\^ateaux
differentiable at $\rho\in X$ if the directional derivative $F'(\rho;\sigma)$ is
linear and continuous in all directions $\sigma\in X$, meaning that there
exists a $\delta
F(\rho)/\delta \rho \in X^*$ such that
\begin{equation}
  F'(\rho;\sigma) := \frac{\mathrm d F(\rho + s\sigma)}{\mathrm ds}
\Big|_{s=0^+} =
  \left(\left.\frac{\delta F(\rho)}{\delta \rho} \right\vert \sigma\right).
\end{equation}
However, $F$ is finite only on 
$\mathcal{I}_N$. In a direction $\sigma \in X$ such that $\int\! (\rho(\mathbf r) + \sigma(\mathbf r))\,\mathrm d \mathbf r \neq
N$, $F(\rho+s \sigma)=+\infty$ for all $s>0$, implying that 
$F'(\rho;\sigma)=+\infty$ and hence that $F$ is not continuous in the direction of $\sigma$. The same argument shows that $F$ is discontinuous 
also in directions $\sigma$ such that the density $\rho + s\sigma$ is negative in a volume of nonzero measure for all $s>0$.

Abandoning strict G\^ateaux differentiability for this reason, we may
at the next step investigate whether the directional derivative exists and is linear for
directions that stay inside $X_N^+$, the subset of $X$ containing all
nonnegative functions that integrate to $N$ electrons. After all, the
discontinuity of $F$ in directions that change the particle number is
typically dealt with using a Lagrange multiplier for
the particle number constraint.
However, Lammert has demonstrated that, even within $X_N^+$, there 
are, for each $\rho$,  directions such that $F'(\rho;\sigma)=+\infty$, 
associated with short-scale but very rapid spatial oscillations in the density (and an infinite kinetic energy).\cite{Lammert2005}

\subsection{Subdifferentiability of $F$}

Apart from lower (upper) semi-continuity of a convex (concave) function,
the minimal useful regularity is not
G\^ateaux differentiability but subdifferentiability (superdifferentiability), see
Appendix~A. Let $f:X\rightarrow \RR\cup\{+\infty\}$ be convex lower semi-continuous. The
subdifferential of $f$ at $x$, $\partial f(x) \subset X^\ast$, is by
definition the collection of slopes of supporting continuous tangent functionals
of $f$ at $x$, known as the \emph{subgradients} of $f$ at $x$, see Fig.~\ref{fig:tangents} in Appendix~A.  
If the graph of $f$ has a ``kink'' at $x$, then there exists 
more than one such subgradient. At a given point $x \in \dom(f)$, the subdifferential $\partial f(x)$ may be empty.
We denote by $\dom(\partial f)$ the set of points $x \in \dom(f)$
such that $\partial f(x)\neq \emptyset$. It is a fact that $\dom(\partial f)$ is dense in $\dom(f)$
when $f$ is a proper lower semi-continuous convex function. 
The superdifferential of a concave function is similarly defined. 

Together with convexity, subdifferentiability is sufficient to characterize
minima of convex functions: A convex lower semi-continuous functional $f:X\rightarrow
\RR\cup\{+\infty\}$ has a global minimum at $x\in X$ if and only if
$0\in \partial f(x)$. Similarly $x\mapsto f(x) + \braket{\varphi,x}$ has a
minimum if and only if $-\varphi\in\partial f(x)$.

Subdifferentiability is a substantially weaker concept than that of
G\^ateaux (or directional) differentiability. Clearly, if $f(x)$ is
G\^ateaux differentiable at $x$, then $\partial
f(x) = \{\delta f(x)/\delta x\}$. However, the converse is not true: in
infinite-dimensional spaces, it is possible that $\partial
f(x)=\{y\}$, a singleton, while $f(x)$ is not differentiable at
$x$. This is so because $\partial f(x)$ being a singleton is not enough
to guarantee continuity of $f$.

In DFT, subdifferentiability has an important
interpretation. Suppose $\rho$ is an ensemble ground-state density of $v$,
meaning that, for all $\rho' \in \mathcal I_N$, we have the inequality
\begin{equation}
  E(v) = F(\rho) + (v \vert \rho) \leq F(\rho') + ( v \vert \rho').
\end{equation}
Then, the subdifferential of $F(\rho)$ at $\rho$ is 
\begin{equation}
  \partial F(\rho) = \{ -v + \mu \;:\; \mu\in \RR \},
\end{equation}
which is a restatement of the first Hohenberg--Kohn theorem:
the potential for which $\rho$ is a ground-state density is unique up
to a constant shift. On the other hand, if $\rho$ is not a
ground-state density for any $v\in X^*$, then $\partial F(\rho) =
\emptyset$. Thus, a nonempty subdifferential is equivalent to
(ensemble) $v$-representability: $\rho \in \dom(\partial F)$ if and only if
$\rho$ is $v$-representable. Denoting the set of ensemble 
$v$-representable densities by $\mathcal B_N$,
we obtain
\begin{equation}
\rho \in \mathcal B_N \; \Longleftrightarrow \;  \partial F(\rho) \neq \emptyset.
\end{equation}
We note that $\mathcal B_N$ is dense in  
$X_N^+$, the subset of $X$ containing all nonnegative functions that integrate to $N$ electrons.

However, even though subdifferentiability is sufficient for many purposes, 
differentiability of $F$ would make formal manipulations easier. 
Moreover, the characterization of $v$-representable
$\rho\in\mathcal B_N$ is unknown and probably dependent on the
interaction strength $\lambda$. These observations motivate the search
for a differentiable regularization of the universal functional.

\subsection{Superdifferentiability of $E$}

Let us briefly consider the superdifferential of $E$, a concave 
continuous (and hence upper semi-continuous) 
function over $X^*$. 
A fundamental theorem of convex analysis states that
\begin{equation}
  -v\in \partial F(\rho) \;\Longleftrightarrow \; \rho \in \partial E(v),
\label{eq:reciprocal}
\end{equation}
where we use 
the same notation for sub- and superdifferentials.
Thus, the potential $v$ has a ground state with density $\rho$ if and
only if $\rho \in \partial E(v)$; if $v$ does not support a ground state, then $\partial E(v)$ is empty. 
Denoting the set of potentials in $X^*$ that support a ground state by $\mathcal V_N$, we obtain:
\begin{equation}
v \in \mathcal V_N \; \Longleftrightarrow \;  \partial E(v) \neq \emptyset.
\end{equation}
If a ground state is nondegenerate, then 
$\partial E(v) = \{ \rho \}$ is a singleton; together with the fact
that $E$ is continuous, it then follows that $E$ is G\^ateaux
differentiable at $v$. On the other hand, if the ground state is degenerate, then
the subdifferential is the convex hull of $g$ ground-state densities:
 \begin{equation}
  \partial E(v) = \co\{ \rho_1, \rho_2,\cdots,\rho_g \},
\end{equation}
and $E$ is not differentiable at this $v$ unless all the $\rho_i$ are
equal---that is, if the degenerate ground states have the same density. For example, in the absence of 
a magnetic field,
the hydrogen atom has the degenerate ground states $1s\alpha$ and $1s\beta$,
with the same density. 

\section{Domain truncation}
\label{sec:domtrunc}

In Sec.~\ref{sec:my}, we outline the mathematical background for
the Moreau--Yosida regularization. Many useful results, such as
differentiability of the Moreau envelope $^\epsilon\!F(\rho)$, are
only available when the underlying vector space $X$ is reflexive or,
even better, when $X$ is a Hilbert space. However, the Banach space $X = L^1(\RR^3)\cap
L^3(\RR^3)$ used in Lieb's formulation of DFT is nonreflexive. In this
section, we truncate the full space $\RR^3$ to a box
$\BB_\ell = [-\ell/2,\ell/2]^3$ of finite volume $\ell^3$, so large 
that the ground state energy of every system of interest
is sufficiently close to the $\RR^3$ limit. What is lost
from this truncation is well compensated for by the fact that we 
may now formulate DFT using the Hilbert space
\begin{equation}
\mathcal{H}_\ell := L^2(\BB_\ell) 
\end{equation}
for both potentials and densities, as we shall now demonstrate.

\subsection{The ground-state problem}

For the spatial domain $\BB_\ell$, the $N$-electron ground-state
problem is a variational search for the lowest-energy 
wave function $\Psi\in H^1_0(\BB_\ell^N)$, the
first-order Sobolev space with vanishing values of the boundary of
$\BB_\ell^N$, the $N$-fold Cartesian product of $\BB_\ell$. The search
is carried out only over the subset of $H^1_0(\BB_\ell^N)$ which is
also normalized and  properly symmetrized: for a total spin projection
of $\hbar(N_{\uparrow}-N_{\downarrow})/2$, the corresponding subset of
wavefunctions is antisymmetric in the $N_{\uparrow}$ first and the
$N_\downarrow$ last particle coordinates separately.

Any potential in the full space,
$\tilde{v}\in L^{3/2}(\RR^3) + L^\infty(\RR^3)$, induces a potential $v = {\tilde{v}}\restriction_{\BB_\ell}\in
L^{3/2}(\BB_\ell) + L^\infty(\BB_\ell)$ in the truncated domain. We
remark that $L^{3/2}(\BB_\ell) + L^\infty(\BB_\ell) = L^{3/2}(\BB_L)$,
with equivalent topologies.
Since the domain is bounded, the
Rellich--Kondrakov theorem\cite{Evans1998} states that
$H^1_0(\BB_\ell^N)$ is compactly embedded in $L^2(\BB^N_\ell)$, which in turn implies that the spectrum of
the Hamiltonian $H_\lambda(v)$ in Eq.\;\eqref{eq:ham} is purely
discrete.\cite{Babuska1989} Thus, for any potential $v$ in the box, one or more ground-state wave functions
$\Psi_v\in H^1_0$ exists.

We next observe that, if $\tilde{v}$ is a Coulomb
potential, then the truncated potential $v$ belongs to $L^2(\BB_\ell)$. Moreover, $L^2(\BB_\ell)
\subset L^{3/2}(\BB_\ell)$ since $\BB_\ell$ is bounded.
It is therefore sufficient to consider the ground-state
energy as a function 
\begin{equation}
  E_\ell : \mathcal{H}_\ell \rightarrow \RR.
\end{equation}
Regarding the continuity of $E_\ell$, we note that the proof given in
Ref.~\onlinecite{Lieb1983} for the continuity of $E$ in the
$L^{3/2}(\mathbb R^3) +L^\infty(\mathbb R^3) $ topology is
equally valid for $E_\ell$ in the 
$L^{3/2}(\mathbb B_\ell) +L^\infty(\mathbb B_\ell) $ topology. 
Convergence in $L^2(\BB_\ell)$ implies convergence in
$L^{3/2}(\BB_\ell) + L^\infty(\BB_\ell)$. Therefore, $E_\ell$ is continuous in
the $L^2(\BB_\ell)$ topology.

We remark that, as $\ell\rightarrow \infty$,
$E_\ell(v)$ converges to the exact, full-space ground-state energy
$E(v)$. On the other hand, the associated eigenfunctions converge if and only if
the full-space ground-state energy $E(v)$ is an eigenvalue, with $v=0$ as a counterexample.

\subsection{Densities and the universal density functional}

Invoking the usual ensemble constrained-search procedure, we obtain 
\begin{equation}
  E_\ell(v) = \inf_{\rho\in\mathcal{I}_N(\BB_\ell)} F_\ell(\rho) + (v \vert \rho),
\end{equation}
where $\mathcal{I}_N(\BB_\ell)$ is the set of $N$-representable
densities: $\rho\in \mathcal{I}_N(\BB_\ell)$ if and only if there exists
a properly symmetrized and normalized $\Psi\in H^1_0(\BB^N_\ell)$ such
that $\Psi\mapsto\rho$. It is straightforward to see that
\begin{equation}
  \begin{split}
    \mathcal{I}_N(\BB_\ell) &= \big\{ \rho \in L^1(\BB_\ell) \; : \; \rho \geq 0 \;\text{(a.e.)},\; 
\\ &\qquad\quad \sqrt{\rho} \in H^1_0(\BB_\ell), \; \textstyle \int\!\!\rho(\mathbf r)\,\mathrm d \mathbf r = N
    \big\}.
  \end{split}
\end{equation}
The density functional $F_\ell$ is completely analogous to the full-space functional
$F$. In particular, $F_\ell$ is lower
semi-continuous in the $L^1(\BB_\ell)$ topology by Theorem~4.4 and
Corollary~4.5 in Ref.~\onlinecite{Lieb1983}.

We remark that $F_\ell(\rho) = F(\rho)$ for any $\rho\in \mathcal{I}_N(\BB_\ell)$, as seen from the fact that,
if $\Psi\in H^1(\RR^{3N})$ and $\Psi\mapsto \rho$ with $\sqrt{\rho}\in
H^1_0(\BB_\ell)$, then we must have $\Psi\in H^1_0(\BB^N_\ell)$.

Since $\BB_\ell$ is bounded, the Cauchy--Schwarz
inequality gives for any measurable $u$,
\begin{equation}\label{eq:cauchy-s}
  \begin{split}
    \|u\|_1 = (1|\; |u| \;) &\leq \|1\|_2 \|u\|_2 =
    |\BB_\ell|^{1/2}\|u\|_2.
  \end{split}
\end{equation}
By an argument similar to that of Eq.\;\eqref{eq:lsc-stronger-top},
$F_\ell$ is now seen to be lower semi-continuous also with
respect to the $L^2(\BB_\ell)$ topology. Note that 
\begin{equation}
\mathcal{I}_N(\BB_\ell) \subset L^1(\mathbb B_\ell) \cap L^3(\mathbb
B_\ell) = L^3(\BB_\ell) \subset L^2(\mathbb B_\ell)
\end{equation}
so that every $N$-representable density is in $L^2(\BB_\ell)$. Since $F_\ell$ is convex and lower semi-continuous on $\mathcal H_\ell = L^2(\mathbb B_\ell)$, 
we may now formulate DFT in the Hilbert space $\mathcal H_\ell$ as 
\begin{subequations}
\begin{align}
  E_\ell(v) &= \inf_{\rho\in \mathcal H_\ell} \left( F_\ell(\rho) + (v \vert \rho) \right), \label{eq:ELvar}\\
  F_\ell(\rho) &= \sup_{v\in \mathcal H_\ell} \left( E_\ell(v) - (v \vert\rho) \right). \label{eq:FLvar}
\end{align}
\end{subequations}
Given that Hilbert spaces possess a richer structure than Banach spaces, this formulation of DFT is particularly convenient:
densities and potentials are now elements of the same vector space $\mathcal H_\ell$ and reflexivity is guaranteed. 

Even for the full space, $\mathcal{I}_N(\mathbb R^3) \subset
L^2(\RR^3)$, indicating that it is possible to avoid the use of the
box. Indeed, we may restrict the ground-state energy to potentials
$v\in L^2(\RR^3) \subset L^{3/2}(\RR^3) + L^\infty(\RR^3)$:
\begin{equation}
\tilde{E}: L^2(\RR^3)\rightarrow \RR, \quad
\tilde{E} = E\restriction_{L^2(\RR^3)}, 
\end{equation}
a concave and continuous map. Invoking the theory of conjugation within this reflexive Hilbert-space setting, 
we have a convex lower semi-continuous universal functional 
\begin{equation}
\tilde{F} : L^2(\RR^3)\rightarrow \RR\cup\{+\infty\}, \quad
\tilde{F} = \tilde{E}^\vee = (\tilde{F}^\wedge)^\vee. 
\end{equation}
However, Coulomb potentials are not contained in $L^2(\mathbb R)$. On the other hand, this theory is sufficient for dealing with all truncated
Coulomb potentials, obtained, for example, from the usual Coulomb potentials by setting them equal to zero outside
the box $\mathbb B_\ell$; it is also sufficient when working with Yukawa rather than Coulomb potentials. 


The optimality conditions for the Hohenberg--Kohn and Lieb variation principles in Eqs.\;\eqref{eq:ELvar} and~\eqref{eq:FLvar} are
\begin{equation}
  -v\in \partial F_\ell(\rho) \;\Longleftrightarrow \; \rho \in \partial E_\ell(v).
\label{eq:reciprocalL}
\end{equation}
Denoting 
the set of densities for which $F_\ell$ is subdifferentiable by $\mathcal B_\ell$ (by analogy with $\mathcal B_N$ in $X$) and
the set of potentials for which $E_\ell$ is superdifferentiable by $\mathcal V_\ell$ (by analogy with $\mathcal V_N$ in $X^\ast$), 
we obtain 
\begin{equation}
\mathcal B_\ell \subsetneq \mathcal H_\ell, \quad \mathcal V_\ell = \mathcal H_\ell
\end{equation}
where $\mathcal{B}_\ell$ is dense in the subset of $\mathcal H_\ell$ containing all non{\-}negative functions that integrate to $N$ electrons.
The differentiability properties of $F_\ell$ are
the same as those of $F$ discussed in Section~\ref{sec:nonreg}. 
To introduce differentiability, a further regularization is necessary.

\section{Moreau--Yosida regularization}
\label{sec:my}

In this section, we present the basic theory of Moreau--Yosida
regularization, introducing infimal convolutions in
Section\;\ref{subsec:inf}, Moreau envelopes in
Section\;\ref{subsec:moreau}, proximal mappings in
Section\;\ref{subsec:prox}, and conjugates of Moreau envelopes in
Section\;\ref{subsec:conmoreau}. The results are given
mostly without proofs; for these proofs, we refer to the book by
Bauschke and~Combettes,\cite{BauschkeAndCombettes} 
whose notation we follow closely.

\subsection{Infimal convolution}
\label{subsec:inf}

In preparation for the Moreau--Yosida regularization, we introduce the 
concept of infimal convolution in this section and discuss its properties on a Hilbert space $\mathcal H$.

\begin{definition}
For $f,g : \mathcal H \to \RR\cup\{+\infty\}$, the \emph{infimal convolution} 
is the function $f \boxempty g: \mathcal H \to \RR\cup\{\pm\infty\}$ given by
\begin{equation}
(f \boxempty g)(x) = \inf_{y \in \mathcal H} \left( f(y) + g(x - y) \right).
\end{equation}
\end{definition}
In the context of convex conjugation, the infimal convolution is analogous to
the standard convolution in the context of the Fourier transform. 
Here are some basic properties of the infimal convolution for
functions that do not take on the value $-\infty$:
\begin{theorem}\label{thm:infimal}
  Let $f,g : \mathcal H \to \RR\cup\{+\infty\}$. Then:
  \begin{enumerate}
  \item
    $f\boxempty g = g \boxempty f$;    
  \item
    $\dom (f\boxempty g) = \dom f + \dom g = \{x+x' \;:\; x\in \dom
    f,\; x'\in\dom g\}$;   
  \item
    $(f\boxempty g)^\wedge = f^\wedge + g^\wedge$;     
  \item
    if $f$ and $g$ are convex, then $f\boxempty g$ is convex.  
  \end{enumerate}
\end{theorem}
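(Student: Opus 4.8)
The plan is to dispose of each of the four parts by a direct manipulation of the defining infimum, the only genuine care being the bookkeeping with extended-real values; this is exactly why the statement restricts to $f,g$ that do not attain $-\infty$. Part~1 I would obtain by the change of variables $z=x-y$ in $(f\boxempty g)(x)=\inf_y(f(y)+g(x-y))$: as $y$ ranges over $\mathcal H$ so does $z$, and the objective becomes $g(z)+f(x-z)$, whose infimum is $(g\boxempty f)(x)$. Part~2 I would argue by mutual inclusion. If $x=x'+x''$ with $x'\in\dom f$ and $x''\in\dom g$, then the choice $y=x'$ gives $f(y)+g(x-y)=f(x')+g(x'')<+\infty$, so $x\in\dom(f\boxempty g)$; conversely, if $(f\boxempty g)(x)<+\infty$ then, the infimum being over values in $\RR\cup\{+\infty\}$, some $y$ satisfies $f(y)+g(x-y)<+\infty$, and since neither summand can equal $-\infty$ both are finite, whence $y\in\dom f$, $x-y\in\dom g$, and $x=y+(x-y)\in\dom f+\dom g$.

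For part~3 I would unfold both operations using $h^\wedge(y)=\inf_x(h(x)+(y\vert x))$ as introduced above. Then $(f\boxempty g)^\wedge(y)=\inf_x\big(\inf_z(f(z)+g(x-z))+(y\vert x)\big)$, and, merging the two infima into a single infimum over the pair $(x,z)$ and substituting $w=x-z$, the additivity $(y\vert x)=(y\vert z)+(y\vert w)$ turns the objective into $(f(z)+(y\vert z))+(g(w)+(y\vert w))$, whose two groups depend on disjoint variables. The infimum of such a separable sum factorizes as $\inf_z(f(z)+(y\vert z))+\inf_w(g(w)+(y\vert w))=f^\wedge(y)+g^\wedge(y)$.

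For part~4 I would prove convexity directly. Fixing $x_1,x_2\in\mathcal H$ and $\lambda\in[0,1]$, I pick arbitrary $y_1,y_2$ and set $y=\lambda y_1+(1-\lambda)y_2$, so that $(\lambda x_1+(1-\lambda)x_2)-y=\lambda(x_1-y_1)+(1-\lambda)(x_2-y_2)$. Convexity of $f$ and $g$ then bounds $f(y)+g((\lambda x_1+(1-\lambda)x_2)-y)$ above by $\lambda(f(y_1)+g(x_1-y_1))+(1-\lambda)(f(y_2)+g(x_2-y_2))$; since the left-hand side dominates $(f\boxempty g)(\lambda x_1+(1-\lambda)x_2)$ and $y_1,y_2$ are free, taking the two infima independently yields the convexity inequality. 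As an alternative I would note that the strict epigraph of $f\boxempty g$ is precisely the Minkowski sum of the strict epigraphs of $f$ and $g$, a one-line consequence of the definition, so that convexity of $f$ and $g$ makes these sets convex and hence $f\boxempty g$ convex.

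The individual steps are short, so there is no single hard obstacle; the part most prone to error is the infimum factorization in part~3, where the extended-real arithmetic must be carried out so as never to produce $(+\infty)+(-\infty)$. The hypothesis $f,g\not\equiv-\infty$ is what guarantees this, since it makes each grouped term bounded below so that each inner infimum lies in $\RR\cup\{-\infty\}$ and their sum is unambiguous; the same hypothesis, together with the tacit properness of $f$ and $g$, is what excludes the degenerate empty-domain cases throughout.
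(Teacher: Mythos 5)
Your proof is correct. Note that the paper itself does not prove Theorem~\ref{thm:infimal} at all---it simply cites Props.~12.6, 12.11 and~13.21 of Bauschke and Combettes---so your contribution is a self-contained verification, and the arguments you give (change of variables for commutativity, mutual inclusion for the domain, merging and separating the double infimum for the conjugate, and either the two-parameter convexity estimate or the strict-epigraph Minkowski-sum identity for convexity) are exactly the standard ones found in that reference. Your closing remark is the right place to be careful: the factorization $\inf_{(z,w)}\bigl(A(z)+B(w)\bigr)=\inf_z A(z)+\inf_w B(w)$ in part~3 can produce the forbidden $(+\infty)+(-\infty)$ only if one of $f,g$ is identically $+\infty$, so properness (which the paper assumes tacitly here and makes explicit only later via $\Gamma_0(\mathcal H)$) is genuinely needed for that part, while parts~1, 2 and~4 survive without it. The only other point worth flagging is that when $f\boxempty g$ attains $-\infty$ the inequality form of convexity is awkward to state, which is why your epigraph alternative in part~4 is the cleaner of the two routes.
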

\begin{proof}
  See Ref~\onlinecite{BauschkeAndCombettes}, Props.~12.6, ~12.11 and~13.21.
\end{proof}

Henceforth, we restrict our attention to \emph{all lower semi-continuous proper convex 
functions} $f: \mathcal H \to \RR\cup\{+\infty\}$, denoting the set of all such functions by
$\Gamma_0(\mathcal H)$, see Appendix~\ref{matsup}. 
We also need the concepts of coercivity and supercoercivity:
a function $f: \mathcal H\rightarrow \RR\cup\{+\infty\}$ is \emph{coercive} if
$f(x)\rightarrow +\infty$ whenever $\|x\|_{\mathcal{H}}\rightarrow +\infty$ and \emph{supercoercive}
if $f(x)/\Vert x \Vert_{\mathcal{H}} \rightarrow +\infty$ whenever $\|x\|_{\mathcal{H}}\rightarrow +\infty$. 
For example, $F_\ell \in \Gamma_0(\mathcal{H}_\ell)$ is coercive, whereas
$-E_\ell \in \Gamma_0(\mathcal{H}_\ell)$ is not coercive. 

For functions in $\Gamma_0(\mathcal H)$, we have
the following stronger properties of the infimal convolution:
\begin{theorem}\label{thm:infimal2}
  Let $f,g \in \Gamma_0(\mathcal H)$ such that either $g$ is supercoercive or $f$ is bounded from below and $g$ is coercive.  Then
  \begin{enumerate}
  \item $f\boxempty g \in \Gamma_0(\mathcal H)$;
  \item  $(f^\wedge + g^\wedge)^\vee = f\boxempty g$;
  \item\label{item:uniqueness}
    for each $x \in \mathcal H$, there exists $x_\ast \in \mathcal H$ such that
\begin{equation}
     (f \; \boxempty \; g)(x) = f(x_*) + g(x-x_*)
\end{equation}
    where $x_*$ is unique if $g$ is strictly convex.
  \end{enumerate}
\end{theorem}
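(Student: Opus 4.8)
The plan is to establish the three assertions in the order (iii), (i), (ii): the attainment statement (iii) produces the minimizing vectors needed for lower semicontinuity, and once $f\boxempty g\in\Gamma_0(\mathcal H)$ is known, part (ii) falls out of a biconjugation argument combined with Theorem~\ref{thm:infimal}(3). I would stress in advance that the genuine difficulty is the lower semicontinuity in (i): infimal convolutions of two $\Gamma_0$ functions are \emph{not} lower semicontinuous in general, and it is precisely the coercivity/supercoercivity hypothesis that rescues the situation by making minimizing sequences bounded, hence weakly precompact in the reflexive space $\mathcal H$.

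First I would prove (iii) together with finiteness of $f\boxempty g$. Fix $x\in\mathcal H$; if $x\notin\dom f+\dom g$ then by Theorem~\ref{thm:infimal}(2) the value is $+\infty$ and there is nothing to show, so assume $x\in\dom f+\dom g=\dom(f\boxempty g)$, where the objective $\phi_x(y):=f(y)+g(x-y)$ is proper. The key point is that $\phi_x$ is coercive in $y$. In the supercoercive case I would minorize $f$ by a continuous affine functional, $f(y)\geq\langle a,y\rangle+b$ (available since $f\in\Gamma_0(\mathcal H)$), and set $z=x-y$, giving $\phi_x(y)\geq\langle a,x\rangle+b-\langle a,z\rangle+g(z)$; supercoercivity of $g$ makes $g(z)-\langle a,z\rangle\to+\infty$ as $\|z\|\to\infty$, so $\phi_x(y)\to+\infty$ as $\|y\|\to\infty$. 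In the alternative case I would instead use a lower bound $f\geq m$ and the coercivity of $g$. Either way $\phi_x$ is a proper, weakly lower semicontinuous, convex, coercive function on the reflexive space $\mathcal H$, so its sublevel sets are bounded and weakly closed, hence weakly compact, and $\phi_x$ attains its infimum at some $x_*$; then $(f\boxempty g)(x)=\phi_x(x_*)$ is finite. Uniqueness of $x_*$ when $g$ is strictly convex is immediate, since $y\mapsto g(x-y)$ is then strictly convex and makes $\phi_x$ strictly convex. This proves (iii) and shows that $f\boxempty g$ never takes the value $-\infty$.

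Next I would prove (i). Convexity is already supplied by Theorem~\ref{thm:infimal}(4), and properness follows from the previous paragraph together with $\dom(f\boxempty g)=\dom f+\dom g\neq\emptyset$. The remaining, principal point is lower semicontinuity, which I would argue sequentially. Take $x_n\to x$ with $\liminf_n(f\boxempty g)(x_n)=:L$ finite, pass to a subsequence realizing the liminf, and use (iii) to choose minimizers $y_n$ with $(f\boxempty g)(x_n)=f(y_n)+g(x_n-y_n)$. The same coercivity estimate as above, applied uniformly along the convergent (hence bounded) sequence $x_n$, forces $\{x_n-y_n\}$ and therefore $\{y_n\}$ to be bounded. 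By reflexivity I would extract a weakly convergent subsequence $y_n\rightharpoonup y$, so that $x_n-y_n\rightharpoonup x-y$, and then invoke the weak lower semicontinuity of the convex lsc functions $f$ and $g$ to conclude $f(y)+g(x-y)\leq\liminf_n f(y_n)+\liminf_n g(x_n-y_n)\leq\liminf_n[f(y_n)+g(x_n-y_n)]=L$. Since $(f\boxempty g)(x)\leq f(y)+g(x-y)$, this gives $(f\boxempty g)(x)\leq L$, establishing lower semicontinuity; combined with convexity and properness this yields $f\boxempty g\in\Gamma_0(\mathcal H)$.

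Finally, (ii) follows with no further analysis. Theorem~\ref{thm:infimal}(3) gives the identity $(f\boxempty g)^\wedge=f^\wedge+g^\wedge$, which requires no coercivity. Taking the $\vee$-conjugate of both sides and using the Fenchel--Moreau biconjugation theorem — applicable precisely because part (i) places $f\boxempty g$ in $\Gamma_0(\mathcal H)$, so that $((f\boxempty g)^\wedge)^\vee=f\boxempty g$ — I obtain $(f^\wedge+g^\wedge)^\vee=f\boxempty g$, which is (ii).
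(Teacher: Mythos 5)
Your proof is correct, and it is considerably more self-contained than the paper's, which disposes of Point~1 by citing Prop.~12.14 of Bauschke--Combettes, of Point~2 by appeal to Theorem~\ref{thm:infimal}, and of Point~3 with a one-line remark. Your argument is essentially the standard direct-method proof that underlies that citation: affine minorization of $f$ (or a uniform lower bound) to get coercivity of $y\mapsto f(y)+g(x-y)$, bounded minimizing sequences, weak compactness in the reflexive space $\mathcal H$, and weak lower semicontinuity of convex lsc functions to pass to the limit --- both for attainment in (iii) and for the lower semicontinuity in (i), which you rightly identify as the only delicate point. You also correctly note that (ii) needs the biconjugation $((f\boxempty g)^\wedge)^\vee=f\boxempty g$ and hence Point~1, a dependence the paper's proof leaves implicit when it says Point~2 ``follows from Theorem~\ref{thm:infimal}.'' One further point in your favour: the paper's justification of Point~3 invokes ``the (super)coerciveness of the mapping $y\mapsto\|x-y\|_{\mathcal H}^2/2$,'' which is really an argument for the special case $g=\frac{1}{2\epsilon}\|\cdot\|^2$ used later for the Moreau envelope, whereas your argument covers the general $g$ of the theorem as stated. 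A minor loose end: in the lower-semicontinuity step you assume $L=\liminf_n(f\boxempty g)(x_n)$ is finite; you should add the observation that your uniform coercivity bound along the bounded sequence $\{x_n\}$ also rules out $L=-\infty$ (the case $L=+\infty$ being trivial). This is implicit in your estimate but worth stating.
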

\begin{proof}
  Point~1 follows from Ref.~\onlinecite{BauschkeAndCombettes},
  Prop.~12.14. Point~2 follows from Theorem~\ref{thm:infimal}
  above. Finally, Point 3~follows from the fact that strictly convex functions have
  unique minima; the existence of a minimum follows from the
  (super)coerciveness of the mapping $y \mapsto \|x-y\|_{\mathcal{H}}^2/2$. 
\end{proof}

\subsection{The Moreau envelope}
\label{subsec:moreau}

In the following, we introduce the Moreau envelope of functions in $\Gamma_0(\mathcal H)$ and review
its properties.

\begin{definition}\label{def:myX}
  For $f \in \Gamma_0(\mathcal H)$ and $\epsilon > 0$, the \emph{Moreau--Yosida regularization} or the \emph{Moreau envelope}
  ${^\epsilon}\!f : \mathcal H \to \RR\cup\{+\infty\}$ is the infimal convolution of $f$ with 
$x \mapsto \frac{1}{2\epsilon} \Vert x \Vert^2_{\mathcal H}$: 
    \begin{align} {^\epsilon}\!f(x) &= \inf_{y\in \mathcal H} \left( f(y) +
      \frac{1}{2\epsilon}\|x-y\|_\mathcal H^2 \right).\label{eq:fmy} 
    \end{align}
\end{definition}
Since $f \in \Gamma_0(\mathcal H)$ and since
$x \mapsto \frac{1}{2\epsilon} \Vert x \Vert^2_{\mathcal H}$ is strictly convex and supercoercive,
it follows from Theorem\;\ref{thm:infimal} that ${^\epsilon}\!f \in \Gamma_0(\mathcal H)$. 
In fact, ${^\epsilon}\!f$ is much more well behaved than a general function in
$\Gamma_0(\mathcal H)$, as the following theorem shows. 
\begin{theorem}\label{thm:my}
The Moreau envelope ${^\epsilon}\!f$ of $f \in \Gamma_0(\mathcal H)$ with $\epsilon> 0$
satisfies the following properties:
  \begin{enumerate}
  \item ${^\epsilon}\!f \in \Gamma_0(\mathcal H)$ with $\dom {^\epsilon}\!f = \mathcal H$;
  \item $\inf f(\mathcal H) \leq {^\epsilon}\!f(x) \leq {^\gamma}\!f(x) \leq f(x)$ for all
$x \in \mathcal H$ and all $0 \leq \gamma \leq \epsilon$;
   \item $\inf {^\epsilon}\!f(\mathcal H) = \inf f(\mathcal H)$;
   \item for all $x\in \mathcal H$, ${^\epsilon}\!f(x) \rightarrow
     f(x)$ from below as $\epsilon\rightarrow 0^+$ (even if $x\notin\dom f$);
  \item ${^\epsilon}\!f$ is continuous;
  \item
    ${^\epsilon}\!f$ is Fr\'echet
    differentiable: for every $x \in \mathcal H$, there exists $\nabla {^\epsilon}\!f(x) \in \mathcal H$
    such that for all $y\in \mathcal H$:
    \begin{equation}
      {^\epsilon}\!f(x + y) = {^\epsilon}\!f(x) + \left(\nabla
        {^\epsilon}\!f(x)|y\right) + o\left(\|y\|_\mathcal H\right);
    \end{equation}
  \item  the  subdifferential of ${^\epsilon}\!f$ at $x$ is given by
    \begin{equation}
      \partial {^\epsilon}\!f(x) = \{ \nabla {^\epsilon}\!f(x)\}. 
    \end{equation}
  \end{enumerate}
\end{theorem}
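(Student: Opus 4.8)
The plan is to dispatch the structural and ordering statements (1)--(3) directly from the infimal-convolution machinery and the definition, to handle the convergence (4) by a monotonicity-plus-lower-semicontinuity argument, and then to obtain continuity, differentiability, and the subdifferential (5)--(7) through the proximal mapping, which is where the real work lies. First I would note that ${}^\epsilon\!f$ is the infimal convolution of $f$ with $q_\epsilon(x)=\frac{1}{2\epsilon}\|x\|_{\mathcal H}^2$. Since $q_\epsilon$ is strictly convex and supercoercive and $f\in\Gamma_0(\mathcal H)$, Theorem~\ref{thm:infimal2} gives ${}^\epsilon\!f\in\Gamma_0(\mathcal H)$, while Theorem~\ref{thm:infimal} yields $\dom{}^\epsilon\!f=\dom f+\dom q_\epsilon=\dom f+\mathcal H=\mathcal H$, proving (1). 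For (2), nonnegativity of $q_\epsilon$ gives ${}^\epsilon\!f(x)\ge\inf f(\mathcal H)$; the admissible choice $y=x$ in the infimum gives ${}^\gamma\!f(x)\le f(x)$; and $\gamma\le\epsilon$ implies $\tfrac{1}{2\epsilon}\le\tfrac{1}{2\gamma}$, so the integrand for ${}^\epsilon\!f$ lies pointwise below that for ${}^\gamma\!f$, giving ${}^\epsilon\!f(x)\le{}^\gamma\!f(x)$. Point (3) is then immediate, since ${}^\epsilon\!f\le f$ everywhere forces $\inf{}^\epsilon\!f(\mathcal H)\le\inf f(\mathcal H)$, while ${}^\epsilon\!f\ge\inf f(\mathcal H)$ supplies the reverse.

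For (4) I would exploit that (2) makes $\epsilon\mapsto{}^\epsilon\!f(x)$ monotone, so the limit $g(x):=\lim_{\epsilon\to0^+}{}^\epsilon\!f(x)=\sup_{\epsilon>0}{}^\epsilon\!f(x)$ exists and satisfies $g(x)\le f(x)$. Let $y_\epsilon$ be the unique minimizer guaranteed by Theorem~\ref{thm:infimal2}. For $x\in\dom f$, the bound ${}^\epsilon\!f(x)\le f(x)$ forces $\tfrac{1}{2\epsilon}\|x-y_\epsilon\|_{\mathcal H}^2\le f(x)-\inf f(\mathcal H)$, whence $y_\epsilon\to x$; lower semicontinuity of $f$ then gives $\liminf_\epsilon{}^\epsilon\!f(x)\ge\liminf_\epsilon f(y_\epsilon)\ge f(x)$, so $g(x)=f(x)$. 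For $x\notin\dom f$ I would argue by contradiction: if $g(x)<\infty$, the same estimate yields $f(y_\epsilon)\le{}^\epsilon\!f(x)\le g(x)$ together with $y_\epsilon\to x$, so lower semicontinuity forces $f(x)\le g(x)<\infty$, contradicting $x\notin\dom f$.

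The core of the argument is (6). I would introduce the proximal point $p(x):=\prox_{\epsilon f}(x)=y_\ast$, single-valued because $q_\epsilon$ is strictly convex, with optimality condition $\epsilon^{-1}(x-p(x))\in\partial f(p(x))$. The crucial lemma is that $p$ is (firmly) nonexpansive: applying monotonicity of $\partial f$ to the subgradients at $p(x)$ and $p(x')$ gives $(x-x'\,\vert\,p(x)-p(x'))\ge\|p(x)-p(x')\|_{\mathcal H}^2$, whence $\|p(x)-p(x')\|_{\mathcal H}\le\|x-x'\|_{\mathcal H}$ by Cauchy--Schwarz. With this in hand, I would test the infimum defining ${}^\epsilon\!f(x+h)$ with the suboptimal point $p(x)$ and the infimum for ${}^\epsilon\!f(x)$ with $p(x+h)$; expanding the squared norms sandwiches the increment ${}^\epsilon\!f(x+h)-{}^\epsilon\!f(x)$ between $\epsilon^{-1}(x-p(x+h)\,\vert\,h)+\tfrac{1}{2\epsilon}\|h\|_{\mathcal H}^2$ and $\epsilon^{-1}(x-p(x)\,\vert\,h)+\tfrac{1}{2\epsilon}\|h\|_{\mathcal H}^2$. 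The gap between the two linear terms is $\epsilon^{-1}(p(x+h)-p(x)\,\vert\,h)$, which nonexpansiveness bounds by $\epsilon^{-1}\|h\|_{\mathcal H}^2=o(\|h\|_{\mathcal H})$, establishing Fr\'echet differentiability with $\nabla{}^\epsilon\!f(x)=\epsilon^{-1}(x-p(x))$. Point (5) is then immediate, since Fr\'echet differentiability implies continuity (alternatively, a finite-valued lower semicontinuous convex function is continuous on the interior of its domain, here all of $\mathcal H$), and (7) follows from the standard fact that a convex function differentiable at $x$ has subdifferential equal to the singleton $\{\nabla{}^\epsilon\!f(x)\}$.

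The hard part will be (6): in infinite dimensions G\^ateaux differentiability does not imply Fr\'echet differentiability, so a naive directional computation is insufficient. The nonexpansiveness of the proximal mapping is precisely what upgrades the sandwich estimate to a uniform $o(\|h\|_{\mathcal H})$ bound and thereby delivers genuine Fr\'echet differentiability; deriving that nonexpansiveness from monotonicity of $\partial f$ is the technical heart of the proof. All of these facts are developed systematically in Ref.~\onlinecite{BauschkeAndCombettes}, to which I would defer for the detailed verifications.
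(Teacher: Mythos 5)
Your proposal is correct in outline and follows the same route as the paper, which for this theorem simply defers to the cited propositions of Ref.~\onlinecite{BauschkeAndCombettes} (Props.~12.9, 12.15, 12.26--12.32); what you have written out---the domain computation via Theorem~\ref{thm:infimal}, the monotonicity sandwich for (2)--(3), and above all the firm nonexpansiveness of $\prox_{\epsilon f}$ derived from monotonicity of $\partial f$ feeding into the two-sided test-point estimate for Fr\'echet differentiability---is precisely the content of those propositions, so you are supplying the details the paper omits rather than taking a different path. Your identification of (6) as the technical heart, and of nonexpansiveness as the ingredient that upgrades the estimate from G\^ateaux to Fr\'echet, is exactly right.

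One step in your argument for (4) has a genuine (if easily repaired) gap: you bound $\tfrac{1}{2\epsilon}\|x-y_\epsilon\|_{\mathcal H}^2\le f(x)-\inf f(\mathcal H)$, which presupposes $\inf f(\mathcal H)>-\infty$. A general $f\in\Gamma_0(\mathcal H)$ need not be bounded below (a continuous linear functional is already a counterexample, and its Moreau envelope is nonetheless finite everywhere), so as written the deduction $y_\epsilon\to x$ does not follow. The standard fix is to use the fact that any proper lower semi-continuous convex function admits a continuous affine minorant $f\ge\braket{a,\cdot}+b$; inserting this into $f(y_\epsilon)+\tfrac{1}{2\epsilon}\|x-y_\epsilon\|_{\mathcal H}^2\le f(x)$ gives $\|x-y_\epsilon\|_{\mathcal H}^2\le 2\epsilon\bigl(C+\|a\|_{\mathcal H}\,\|x-y_\epsilon\|_{\mathcal H}\bigr)$, which still forces $y_\epsilon\to x$, after which your lower semi-continuity argument (and the contradiction argument for $x\notin\dom f$) goes through unchanged. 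In the DFT application the point is moot, since $F_\ell\ge 0$, but the theorem is stated for arbitrary $f\in\Gamma_0(\mathcal H)$.
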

\begin{proof}
  Point\;1 follows from
  Theorems~\ref{thm:infimal} and~\ref{thm:infimal2}. For Points~2 and 3, see
  Ref.\;\onlinecite{BauschkeAndCombettes}, Prop.~12.9. For Point 4, see Prop.~12.32.
  For Points 5--7, see Props.~12.15, 12.28, and 12.29. 
\end{proof}

In Figure~\ref{fig:moreau}, the Moreau envelope is illustrated for a convex function $f$ on the real axis. 
We observe that the minimum value of
$f(x)$ is preserved by the Moreau envelope $^\epsilon f(x)$ and that the second 
argument $x\mapsto \|x-x'\|_{\mathcal{H}}^2/(2\epsilon)$ to the infimal convolution removes all kinks, giving a curvature
equal to that of this function.

\begin{figure}
    \includegraphics{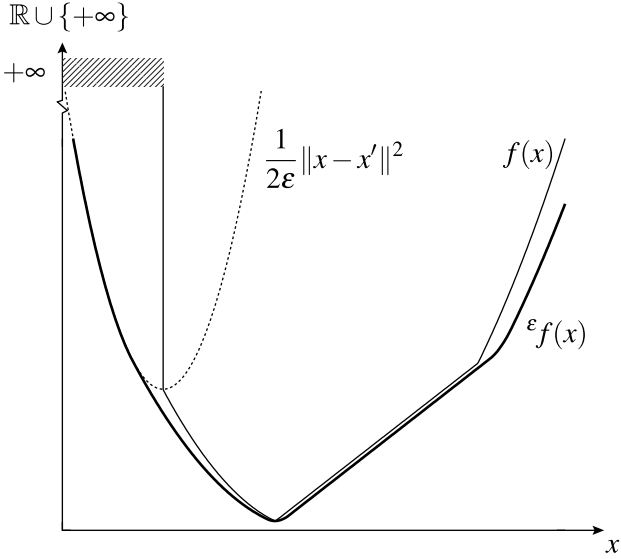}
  \caption{Illustration of the Moreau envelope of a simple convex function
    $f : \RR \rightarrow \RR\cup\{+\infty\}$.  The function $^\epsilon\!f(x)$ is
    plotted in thick lines, whereas $f(x)$ is shown in a
    thinner line. Finally, for a chosen value of $x'$, 
    the function $x\mapsto \|x-x'\|/(2\epsilon)$ is
    superposed on $f(x)$ and ${^\epsilon}\!f(x)$ using a dashed line. \label{fig:moreau}}
\end{figure}

\subsection{The proximal mapping}
\label{subsec:prox}

From Theorem\;\ref{thm:infimal}, it follows that the infimum of ${^\epsilon}\!f(x)$
in Eq.\;\eqref{eq:fmy} is attained with a unique minimizer. We make the following definitions:
\begin{definition}\label{prox} 
Let $f \in \Gamma_0(\mathcal H)$ and $\epsilon > 0$.
The \emph{proximal mapping} $\prox_{\epsilon f} : \mathcal H \to \mathcal H$ is defined by 
\begin{equation}
\prox_{\epsilon f}(x) = \argmin_{y\in \mathcal H} \left( f(y) +
      \frac{1}{2\epsilon}\|x-y\|_\mathcal H^2 \right),\label{eq:prox} 
\end{equation}
where $\prox_{\epsilon f}(x)$ is \emph{the proximal point of $f$ at $x \in \mathcal H$}.
\end{definition}
The usefulness of the proximal mapping follows from the following theorem:
\begin{theorem}\label{thm:my2}
Let $f \in \Gamma_0(\mathcal H)$ and $\epsilon > 0$. Then
  \begin{enumerate}
  \item if $x \in \dom f$ and $\epsilon\rightarrow 0^+$, then
    \begin{equation}
    \left\Vert \prox_{\epsilon f}(x) - x\right\Vert_{\mathcal H}^2 = O(\epsilon);
    \label{eq:proxap}
    \end{equation}
  \item the  Fr\'echet (and G\^ateaux) derivative of ${^\epsilon}\!f$ at $x$ is given by
    \begin{equation}
      \frac{\delta {^\epsilon}\!f(x)}{\delta x} = \nabla {}^\epsilon\!
      f(x) = \epsilon^{-1} \left(x - \prox_{\epsilon f}(x) \right);
    \end{equation}
\item for all $p,x \in \mathcal H$, it holds that
\begin{equation}
p = \prox_{\epsilon f}(x) \; \Longleftrightarrow \;  \epsilon^{-1} (x - p) \in \partial f(p);
\label{eq:proxdif}
\end{equation}
\item if $x \in \mathcal H$, then 
\begin{equation}
   \nabla {^\epsilon}\!f(x) \in \partial f( \prox_{\epsilon f}(x)). 
\end{equation}
\end{enumerate}
\end{theorem}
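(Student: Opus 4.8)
The plan is to prove Point~3 first, since it is the optimality condition on which the other three points rest, and then to deduce Points~1, 2 and~4 from it together with the envelope inequalities already recorded in Theorem~\ref{thm:my}. For Point~3 I would argue directly from the definition of the proximal point, avoiding any appeal to an abstract subdifferential sum rule. Writing $\phi(y) = f(y) + \tfrac{1}{2\epsilon}\|x-y\|_{\mathcal H}^2$, the point $p = \prox_{\epsilon f}(x)$ is by definition the global minimizer of $\phi$, so $\phi(y)\ge\phi(p)$ for all $y$. Using the identity $\|x-p\|_{\mathcal H}^2 - \|x-y\|_{\mathcal H}^2 = 2(\,x-p\,\vert\,y-p\,) - \|y-p\|_{\mathcal H}^2$ this rearranges to
\begin{equation}
f(y) - f(p) \ge \epsilon^{-1}(\,x-p\,\vert\,y-p\,) - \tfrac{1}{2\epsilon}\|y-p\|_{\mathcal H}^2 \qquad \forall y .
\end{equation}
To eliminate the quadratic remainder I would insert $y_t = p + t(y-p)$ with $t\in(0,1]$, use convexity of $f$ to bound the left-hand side above by $t\,(f(y)-f(p))$, divide by $t$ and let $t\to 0^+$, which yields the subgradient inequality $f(y)-f(p)\ge \epsilon^{-1}(\,x-p\,\vert\,y-p\,)$, i.e.\ $\epsilon^{-1}(x-p)\in\partial f(p)$. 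The converse is a one-line computation: assuming $\epsilon^{-1}(x-p)\in\partial f(p)$ and using the same polarization identity gives $\phi(y)-\phi(p)\ge \tfrac{1}{2\epsilon}\|y-p\|_{\mathcal H}^2\ge 0$, so $p$ minimizes $\phi$.

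Next I would record the nonexpansiveness of the proximal map, which follows from Point~3 and monotonicity of $\partial f$: for $p=\prox_{\epsilon f}(x)$ and $p'=\prox_{\epsilon f}(x')$, monotonicity of the subdifferential gives $(\,(x-p)-(x'-p')\,\vert\,p-p'\,)\ge 0$, hence $\|p-p'\|_{\mathcal H}^2 \le (\,x-x'\,\vert\,p-p'\,)\le \|x-x'\|_{\mathcal H}\,\|p-p'\|_{\mathcal H}$ and therefore $\|p-p'\|_{\mathcal H}\le \|x-x'\|_{\mathcal H}$. For Point~2 I would insert the generally suboptimal trial points $p$ and $p'=\prox_{\epsilon f}(x+y)$ into the definitions of ${}^\epsilon\!f(x+y)$ and ${}^\epsilon\!f(x)$ respectively; expanding the squared norms yields the two-sided estimate
\begin{equation}
\epsilon^{-1}(\,x-p'\,\vert\,y\,) + \tfrac{1}{2\epsilon}\|y\|_{\mathcal H}^2 \le {}^\epsilon\!f(x+y) - {}^\epsilon\!f(x) \le \epsilon^{-1}(\,x-p\,\vert\,y\,) + \tfrac{1}{2\epsilon}\|y\|_{\mathcal H}^2 .
\end{equation}
Since $(\,x-p'\,\vert\,y\,) = (\,x-p\,\vert\,y\,) + (\,p-p'\,\vert\,y\,)$ and $\|p-p'\|_{\mathcal H}\le\|y\|_{\mathcal H}$, the lower bound differs from $\epsilon^{-1}(\,x-p\,\vert\,y\,)$ by a term of order $\|y\|_{\mathcal H}^2$. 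Both bounds therefore equal $(\,\epsilon^{-1}(x-p)\,\vert\,y\,) + o(\|y\|_{\mathcal H})$, which simultaneously re-derives the Fr\'echet differentiability of Theorem~\ref{thm:my}, Point~6, and identifies $\nabla{}^\epsilon\!f(x) = \epsilon^{-1}(x-\prox_{\epsilon f}(x))$.

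Point~4 is then immediate by combining Points~2 and~3: $\nabla{}^\epsilon\!f(x) = \epsilon^{-1}(x-p) \in \partial f(p) = \partial f(\prox_{\epsilon f}(x))$. For Point~1 I would start from ${}^\epsilon\!f(x) = f(p) + \tfrac{1}{2\epsilon}\|x-p\|_{\mathcal H}^2 \le f(x)$, valid by Theorem~\ref{thm:my}, Point~2, because $x\in\dom f$; this rearranges to $\tfrac{1}{2\epsilon}\|x-p\|_{\mathcal H}^2 \le f(x)-f(p)$. Bounding $f(p)$ below by a continuous affine minorant $y\mapsto (\,s\,\vert\,y\,)+c$, which exists for any $f\in\Gamma_0(\mathcal H)$, and writing $d=\|x-p\|_{\mathcal H}$ and $A = f(x)-(\,s\,\vert\,x\,)-c\ge 0$, one obtains the quadratic inequality $d^2 \le 2\epsilon A + 2\epsilon\|s\|_{\mathcal H}\,d$, whence $d \le \epsilon\|s\|_{\mathcal H} + \sqrt{\epsilon^2\|s\|_{\mathcal H}^2 + 2\epsilon A} = O(\sqrt{\epsilon})$ and thus $d^2 = O(\epsilon)$. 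The main obstacle is Point~2: it is the only step requiring a quantitative, two-sided estimate, and the crux is controlling $\prox_{\epsilon f}(x+y)-\prox_{\epsilon f}(x)$ through the nonexpansiveness derived from Point~3, so that the discrepancy between the two bounds is genuinely $o(\|y\|_{\mathcal H})$ rather than merely $O(\|y\|_{\mathcal H})$.
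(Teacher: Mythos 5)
Your argument is correct, but it takes a different route from the paper: the paper does not prove Theorem~\ref{thm:my2} at all, deferring each point to Propositions 12.26, 12.29 and 12.32 of Bauschke and Combettes, whereas you supply a complete, self-contained Hilbert-space proof. Your ordering is the natural one. Point~3 follows cleanly from the polarization identity plus the standard convexity trick of sliding along $y_t=p+t(y-p)$ to kill the quadratic remainder, and the converse direction indeed gives the stronger inequality $\phi(y)-\phi(p)\ge\tfrac{1}{2\epsilon}\|y-p\|_{\mathcal H}^2$, which also re-establishes uniqueness of the proximal point. The derivation of nonexpansiveness of $\prox_{\epsilon f}$ from monotonicity of $\partial f$ is the standard firm-nonexpansiveness argument, and it is exactly the ingredient needed to make Point~2 work: your two-sided bound sandwiches the increment between $\epsilon^{-1}(x-p\vert y)+O(\|y\|_{\mathcal H}^2)$ on both sides, since $\|p-p'\|_{\mathcal H}\le\|y\|_{\mathcal H}$, so the error is genuinely $o(\|y\|_{\mathcal H})$ and you get Fr\'echet differentiability together with the identification of the gradient in one stroke. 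Point~1 via a continuous affine minorant (which exists for any $f\in\Gamma_0(\mathcal H)$, as the Appendix notes) and the resulting quadratic inequality in $d=\|x-p\|_{\mathcal H}$ is a clean quantitative version of the textbook argument. What your approach buys is transparency and independence from the reference; what the paper's citation buys is brevity and access to the sharper statements in the source (e.g.\ firm nonexpansiveness and the full Moreau decomposition), which your nonexpansiveness lemma only partially reproduces. One presentational remark: your Point~2 argument silently re-proves Theorem~\ref{thm:my}, Point~6, which is harmless but worth flagging so the reader does not think the differentiability claimed there is being assumed.
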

\begin{proof}
  For Point 1, see the proof of Prop.~12.32 in
  Ref.~\onlinecite{BauschkeAndCombettes}. For Point 2, see
  Prop.~12.29; for Point 3, see Prop.~12.26. Point 4 follows from
  Point 2 and 3.
\end{proof}

\subsection{The conjugate of the Moreau envelope}
\label{subsec:conmoreau}

Given that ${^\epsilon}\!f \in \Gamma_0(\mathcal H)$, there exists a concave ${^\epsilon}\!g \in - \Gamma_0(\mathcal H)$ such that
$({^\epsilon}\!f)^\wedge = {^\epsilon}\!g$ and $({^\epsilon}g)^\vee = {^\epsilon}\!f$. The following theorem gives the basic properties of
this conjugate:
\begin{theorem}\label{thm:cmy}
If ${^\epsilon}\!f$ is the Moreau envelope of $f \in \Gamma_0(\mathcal H)$, then
their conjugates and the superdifferentials of these conjugates are related as 
\begin{subequations}
\begin{align}
({^\epsilon}\!f)^\wedge(x) &= f^\wedge(x) - \frac{1}{2} \epsilon \Vert x \Vert^2_\mathcal H, \label{eq:fw}\\
\partial ({^\epsilon}\!f)^\wedge(x) &= \partial f^\wedge(x) - \epsilon x. \label{eq:fcs}
\end{align}
\end{subequations}
\end{theorem}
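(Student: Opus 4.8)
The plan is to reduce both identities to the conjugation rule for infimal convolutions established in Theorem~\ref{thm:infimal}. Recall that by Definition~\ref{def:myX} the Moreau envelope is the infimal convolution ${}^\epsilon\!f = f \boxempty q_\epsilon$ with the quadratic $q_\epsilon(y) := \tfrac{1}{2\epsilon}\|y\|_{\mathcal H}^2$. Applying point~3 of Theorem~\ref{thm:infimal}, namely $(f\boxempty g)^\wedge = f^\wedge + g^\wedge$, immediately gives
\begin{equation}
({}^\epsilon\!f)^\wedge = f^\wedge + q_\epsilon^\wedge,
\end{equation}
so that everything hinges on evaluating the single concave conjugate $q_\epsilon^\wedge$.

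By definition of the $\wedge$-conjugate, $q_\epsilon^\wedge(x) = \inf_{y\in\mathcal H}\bigl(\tfrac{1}{2\epsilon}\|y\|_{\mathcal H}^2 + (x|y)\bigr)$. Since the integrand is strictly convex and supercoercive in $y$, the infimum is attained at the unique stationary point $y = -\epsilon x$; substituting back and simplifying (completing the square) yields $q_\epsilon^\wedge(x) = -\tfrac{1}{2}\epsilon\|x\|_{\mathcal H}^2$. Combined with the display above, this establishes Eq.~\eqref{eq:fw}.

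For the superdifferential identity~\eqref{eq:fcs}, I would differentiate the explicit formula just obtained. The concave function $({}^\epsilon\!f)^\wedge = f^\wedge - \tfrac{1}{2}\epsilon\|\cdot\|_{\mathcal H}^2$ is the sum of $f^\wedge$ and the everywhere Fr\'echet-differentiable concave quadratic $x\mapsto -\tfrac{1}{2}\epsilon\|x\|_{\mathcal H}^2$, whose gradient is $-\epsilon x$ and whose superdifferential is therefore the singleton $\{-\epsilon x\}$. The Moreau--Rockafellar sum rule for subdifferentials then gives $\partial({}^\epsilon\!f)^\wedge(x) = \partial f^\wedge(x) - \epsilon x$. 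The only point requiring care is the validity of this sum rule, which in general demands a constraint-qualification condition; here it holds trivially, since the quadratic term is finite and continuous on all of $\mathcal H$, so that $\dom f^\wedge \cap \operatorname{int}\dom\bigl(-\tfrac{1}{2}\epsilon\|\cdot\|_{\mathcal H}^2\bigr) \neq \emptyset$. I therefore expect no substantive obstacle beyond careful bookkeeping with the nonstandard $\wedge/\vee$ conjugation convention.
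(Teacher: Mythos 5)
Your proposal is correct and follows essentially the same route as the paper: both identities are reduced to the conjugation rule for infimal convolutions (point~3 of Theorem~\ref{thm:infimal}) together with the explicit conjugate of the quadratic $y\mapsto\tfrac{1}{2\epsilon}\|y\|_{\mathcal H}^2$, and the superdifferential identity follows from the sum rule for superdifferentials, whose qualification condition is satisfied because the quadratic term is finite and continuous on all of $\mathcal H$. The only cosmetic difference is that you compute $q_\epsilon^\wedge$ by completing the square, whereas the paper simply quotes the known conjugate pair; your bookkeeping with the $\wedge$ convention is accurate.
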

\begin{proof}
  Eq.\;\eqref{eq:fw} follows from the fact that the convex conjugate of
  $x\mapsto \|x\|_{\mathcal{H}}^2/(2\epsilon)$ is $x\mapsto \epsilon\|x\|_{\mathcal{H}}^2/2$ and from
  Theorem~\ref{thm:infimal}. Eq.\;\eqref{eq:fcs} follows from the fact
  that the superdifferential of a sum of concave functions 
  is the sum of their superdifferentials if one of the functions is continuous at a common point in their domains, see
  Remark~16.36 of
  Ref.~\onlinecite{BauschkeAndCombettes}. Finally, $\partial (\epsilon\|x\|_{\mathcal{H}}^2/2) = \{\epsilon x \}$.
\end{proof}
Being related in such a simple manner, 
$f^\wedge$ and $({}^\epsilon\!f)^\wedge$ share many properties. We note, however, that $({}^\epsilon\!f)^\wedge$ is strictly concave,
whereas $f^\wedge$ may be merely concave.

We remark that the Moreau envelope is not defined for a \emph{concave}
function $g\in -\Gamma_0(\mathcal{H})$, only for convex
functions. Thus, the notation ${}^\epsilon g$ for a $g\in
-\Gamma_0(\mathcal{H})$ is not to be interpreted as a Moreau envelope,
but as the \emph{concave conjugate} of a Moreau envelope,
${}^\epsilon\!g = ({}^\epsilon(g^\vee))^\wedge$.

\section{Moreau--Yosida regularized DFT}
\label{sec:myDFT}

Having introduced Moreau--Yosida regularization in the preceding section, we
are ready to apply it to DFT on the Hilbert space $\mathcal H_\ell = L^2(\mathbb B_\ell)$.

\subsection{Moreau--Yosida regularized DFT}

Applying Eqs.\;\eqref{eq:fmy} and~\eqref{eq:fw} with $f = F_\ell$ and $f^\wedge = E_\ell$, we obtain
the regularized Lieb functional ${^\epsilon}\!F_\ell: \mathcal H_\ell \to \mathbb R$ and 
ground-state energy ${^\epsilon}\!E_\ell : \mathcal H_\ell \to \mathbb R$,
\begin{subequations}
\begin{align} 
{^\epsilon}\!F_\ell(\rho) &= \inf_{\rho'\in \mathcal H_\ell} \left( F_\ell(\rho') +
      \tfrac{1}{2\epsilon}\|\rho-\rho'\|_2^2 \right), \label{eq:myf} \\
      {^\epsilon}\!E_\ell(v) &= E_\ell(v) - \frac{1}{2}\epsilon\|v\|_2^2. \label{eq:mye}
\end{align}
\end{subequations}
Importantly, these functions are related to each other as conjugate functions; 
just as we have already encountered for the $(E,F)$ and $(E_\ell,F_\ell)$ conjugate pairs.
As such, the following Hohenberg--Kohn and Lieb variation principles hold on the Hilbert space $\mathcal H_\ell$: 
\begin{subequations}
\begin{alignat}{2}
  {^\epsilon}\!E_\ell(v) &= \inf_{\rho\in \mathcal H_\ell} \left( {^\epsilon}\!F_\ell(\rho) + ( v \vert \rho) \right), &\quad \forall v &\in \mathcal H_\ell \label{eq:ER_of_v2}, \\
  {^\epsilon}\!F_\ell(\rho) &= \sup_{v\in \mathcal H_\ell} \left( {^\epsilon}\!E_\ell(v) - ( v \vert \rho) \right), &\quad \forall \rho&\in \mathcal H_\ell. 
  \label{eq:FR_of_rho}
\end{alignat}
\end{subequations}
However, unlike $F$ and $F_\ell$, 
which are finite only for $N$-representable densities, 
the Moreau--Yosida regularized Lieb functional ${^\epsilon}\!F_\ell$ is finite on the whole Hilbert space:
\begin{equation}
\dom({^\epsilon}\!F_\ell) = \mathcal H_\ell
\end{equation}
since, in Eq.\;\eqref{eq:myf}, a finite value is always found on the right-hand side, even when $\rho \notin \mathcal I_N$.
A curious side effect of the regularization is therefore that the minimizing density
in the regularized Hohenberg--Kohn variation principle in Eq.\;\eqref{eq:ER_of_v2} 
(which exists for all $v \in \mathcal H_\ell$) may not
be $N$-representable: it may be negative in a region of
finite measure or contain an incorrect number of electrons.

To illustrate the behaviour of the regularized functional for nonphysical densities, consider $^{\epsilon}\!F_\ell(\rho + c)$
when $\rho$ is $N$-representable and $c \in \mathbb R$. From the definition of the Moreau envelope 
in Eq.\;\eqref{eq:myf}, we obtain straightforwardly that
\begin{equation}
{^\epsilon}\!F_\ell(\rho + c) = {^\epsilon}\!F_\ell(\rho) + \frac{1}{2\epsilon} \ell^3 c^2. \label{eq:Frhoc}
\end{equation}
The regularized density functional thus depends on $c$ in a simple quadratic manner, with a minimum at $c = 0$. As 
$\epsilon$ tends to zero from above, ${^\epsilon}\!F_\ell(\rho + c)$ increases more and more rapidly with increasing $\vert c \vert$, 
approaching $F_\ell(\rho+c)= +\infty$ more closely. As expected, the regularized functional is differentiable
in the direction that changes the number of electrons.

On the face of it, the existence of minimizing `pseudo-densities' in the Hohenberg--Kohn variation principle that are not $N$-representable 
may seem to be a serious shortcoming of the Moreau--Yosida regularization---ideally, we would like the minimizing density to arise from
some $N$-electron wave function. However, the appearance of nonphysical pseudo-densities is an inevitable consequence of the 
regularization---differentiability in all directions 
cannot be achieved without extending the effective domain of $F_\ell$ to all $\mathcal H_\ell$; alternatively,
we may retain the effective domain of $N$-representable densities and instead  work with restricted functional derivatives,
defined only in directions that conserve some properties of the density. Such an approach is straightforward for directions that
change the number of electrons in the system but much more difficult for directions that lead to negative densities or to an
infinite kinetic energy. 

The existence of minimizing pseudo-densities that are not $N$-representable is less important than the fact that 
${^\epsilon}\!F_\ell$ converges pointwise to $F$ from below as $\epsilon \to 0^+$, even when $\rho \notin \mathcal I_N(\BB_\ell)$.
Also, we shall in the next subsection see that
every $\rho \in \mathcal H_\ell$ is linked to a unique physical ground-state density $\rho_\epsilon \in \mathcal B_\ell$. It is therefore possible to
regard (and to treat) the Hohenberg--Kohn minimization over pseudo-densities in $\mathcal H_\ell$ as a minimization over physical densities in $\mathcal B_\ell$,
as discussed below.

We also observe that ${^\epsilon}\!E_\ell$ converges pointwise to $E_\ell$ from below as $\epsilon \rightarrow 0^+$. More importantly, for any
chosen $\epsilon > 0$, we may recover the exact ground-state energy $E_\ell$ from the regularized energy ${^\epsilon}\!E_\ell$ simply
by adding the term $\frac{1}{2}\epsilon\|v\|_2^2$, which \emph{does not depend on the electronic
structure of the system}. Indeed, this term is no more relevant for the molecular electronic system than the
neglected nuclear--nuclear repulsion term---its purpose is merely to make
the ground-state energy strictly concave and supercoercive in the external potential so that the universal density
functional becomes differentiable and continuous. Indeed, no information regarding the electronic system is lost in the 
regularization beyond what is lost upon truncation of the domain from $\mathbb R^3$ to an arbitrarily large cubic box $\mathbb B_\ell$,
needed to make $\frac{1}{2}\epsilon\|v\|_2^2$ finite for all potentials.

\subsection{The proximal density and potential} 

According to the general theory of Moreau--Yosida regularization, a unique minimizer, which we shall here call the \emph{proximal (ground-state) density},
\begin{equation}
\rho_\epsilon = \prox_{\epsilon F_\ell}(\rho) .
\end{equation}
exists for any $\rho \in \mathcal H_\ell$ in the regularized Lieb functional of Eq.\;\eqref{eq:myf}, which may therefore be written as
\begin{equation}
{^\epsilon}\!F_\ell(\rho) = F_\ell(\rho_\epsilon) +
      \frac{1}{2\epsilon}\|\rho - \rho_\epsilon\|_2^2.
\label{eq:proxd} 
\end{equation}
From Eq.\;\eqref{eq:proxdif}, we conclude that the standard  Lieb functional is
subdifferentiable at $\rho_\epsilon$ and hence that $\rho_\epsilon$ is an ensemble $v$-representable ground-state density in $\mathcal H_\ell$:
\begin{equation}
\rho_\epsilon \in \mathcal B_\ell  .
\end{equation}
We also see from
Eq.\;\eqref{eq:proxdif} that every $\rho \in \mathcal H_\ell$ and associated proximal ground-state density $\rho_\epsilon$ together 
satisfy the subgradient relation 
\begin{equation}
\epsilon^{-1} \left(\rho - \rho_\epsilon \right) \in \partial F_\ell(\rho_\epsilon),
\label{eq:veps0}
\end{equation}
implying that 
\begin{equation}
v_\epsilon = \epsilon^{-1} \left(\rho_\epsilon - \rho\right)
\label{eq:veps}
\end{equation}
is an external potential with ground-state density $\rho_\epsilon \in \mathcal B_\ell$.
In the following, we refer to $v_\epsilon$ as the \emph{proximal potential}
associated with $\rho$. We recall that,
by the Hohenberg--Kohn theorem, the density determines the potential up to a constant. The  
subdifferential of $F_\ell$ at the proximal density $\rho_\epsilon$ is therefore 
\begin{equation}
\partial F_\ell(\rho_\epsilon) = - v_\epsilon + \mathbb R .
\end{equation}
where $v_\epsilon$ is the proximal potential of Eq.\;\eqref{eq:veps}. 

Conversely, suppose that $\rho \in \mathcal B_\ell$. There then exists an external potential $v$ such that
$-v \in \partial F_\ell(\rho)$. Expressing $v$ in the form $v = \epsilon^{-1}(\rho - \tilde \rho)$ for some
$\tilde \rho \in \mathcal H_\ell$,
we obtain $\epsilon^{-1}(\tilde \rho - \rho ) \in \partial F_\ell(\rho)$, which by 
Eqs.\;\eqref{eq:proxdif} and~\eqref{eq:veps0} implies that 
$\rho$ is the proximal density of $\tilde \rho$.
Thus, every ensemble $v$-representable density $\rho \in \mathcal B_\ell$ is the proximal
density of $\rho - \epsilon v \in \mathcal H_\ell$ where $v$ is such that $-v \in \partial F_\ell(\rho)$:
\begin{equation}
\rho = \prox_{\epsilon F}(\rho - \epsilon v)  .
\label{eq:rhoisprox}
\end{equation}
In short, we have the important fact that the set of proximal densities in $\mathcal H_\ell$ is
\emph{precisely} the set of ensemble ground-state densities $\mathcal
B_\ell$.
A density $\rho \in \mathcal H_\ell$ whose proximal density is $\rho_\epsilon$ is called a \emph{carrier density} of $\rho_\epsilon$.

By the Hohenberg--Kohn theorem, the potential $v$ in Eq.\;\eqref{eq:rhoisprox} is unique up a constant $c \in \mathbb R$. 
The carrier density is therefore uniquely determined up to an additive constant. The
nonuniqueness of the carrier density also follows directly from Eq.\;(\ref{eq:Frhoc}), which shows that $\rho$ 
and $\rho +c$ where $\rho \in \mathcal H_\ell$ and $c \in \mathbb R$ have the same proximal ground-state density $\rho_\epsilon \in \mathcal B_\ell$.

To summarize, even though the densities in the regularized Hohenberg--Kohn variation principle 
in Eq.\;\eqref{eq:ER_of_v2} are pseudo-densities (not associated with any $N$-electron wave function), 
every such density $\rho \in \mathcal H_\ell$ is uniquely mapped to a ground-state density by the surjective proximal operator 
\begin{equation}
\prox_{\epsilon F}: \mathcal H_\ell \to \mathcal B_\ell.
\end{equation}
This operator performs the decomposition 
\begin{equation}
\rho = \rho_\epsilon - \epsilon v_\epsilon, 
\end{equation}
where the proximal density $\rho_\epsilon \in \mathcal B_\ell$ may be viewed as the `projection' of $\rho$ onto $\mathcal B_\ell$
with potential $v_\epsilon \in \mathcal V_\ell$. We note that $\rho_\epsilon \neq \rho$, even when $\rho \in \mathcal B_\ell$.
The proximal operator is therefore not a true projector.

For any $\rho \in \mathcal H_\ell$, the proximal density $\rho_\epsilon$ and proximal potential $v_\epsilon$ together satisfy
the usual reciprocal relations for the standard  Lieb functional and ground-state energy:
\begin{equation}
-v_\epsilon \in \partial F_\ell(\rho_\epsilon) \; \Longleftrightarrow \; 
\rho_\epsilon \in \partial E_\ell(v_\epsilon),
\end{equation}
see Eq.\;\eqref{eq:reciprocal}, and therefore satisfy the relation:
\begin{equation}
E_\ell(v_\epsilon) = F_\ell(\rho_\epsilon) + (v_\epsilon \vert \rho_\epsilon) .
\label{eq:ELFL}
\end{equation}
Thus, to every solution of the regularized Hohenberg--Kohn variation principle with $-v \in \partial \,{^\epsilon}\!F_\ell(\rho)$ in Eq.\;\eqref{eq:ER_of_v2}
there corresponds a proximal solution to the standard  variation principle with 
$-v_\epsilon \in \partial F_\ell(\rho_\epsilon)$.

\subsection{Differentiability of ${^\epsilon}\!F_\ell$}

Regarding the differentiability of the regularized Lieb functional, 
we note from Theorems\;\ref{thm:my} and~\ref{thm:my2} that ${^\epsilon}\!F_\ell$ is Fr\'echet differentiable so that
\begin{equation}
{^\epsilon}\!F_\ell(\rho + \sigma) = {^\epsilon}\!F_\ell(\rho) -  (v_\epsilon \vert \sigma ) +  o\left(\|\sigma\|_2\right),
\end{equation}
with the derivative given by Eq.\;\eqref{eq:veps}:
\begin{equation}
\nabla \,{^\epsilon}\!F_\ell(\rho) = -v_\epsilon.
\label{eq:FLder}
\end{equation}
G\^ateaux differentiability follows from Fr\'echet 
differentiability: the existence of $\nabla \,{^\epsilon}\!F_\ell(\rho)$ implies that the directional
derivatives at $\rho$ exist in all directions $\sigma\in \mathcal H_\ell$ and are equal to
\begin{equation}
  \frac{\rmd {^\epsilon}\!F_\ell(\rho + t\sigma) }{\rmd t} \Big|_{t = 0} =
  \left( \nabla \, {^\epsilon}\!F_\ell(\rho)  \vert \sigma \right).
\end{equation}
Hence the functional derivative of ${^\epsilon}\!F_\ell$ is well defined and given by
\begin{equation}
\frac{\delta {^\epsilon}\!F_\ell(\rho)}{\delta \rho(\mathbf r)} = - v_\epsilon(\mathbf r),
\label{eq:Fder}
\end{equation}
justifying the formal manipulations involving functional derivatives in DFT,
recalling that ${^\epsilon}\!F_\ell(\rho)$ tends to $F_\ell(\rho)$
pointwise from below as $\epsilon\to 0^+$. (However,
$v_\epsilon$ need not converge to anything.)

\subsection{The optimality conditions of regularized DFT}

The optimality conditions of the regularized DFT variation principles in Eqs.\;\eqref{eq:ER_of_v2} and \eqref{eq:FR_of_rho}
are the reciprocal relations
\begin{equation}
-v \in \partial \,{^\epsilon}\!F_\ell(\rho) \; \Longleftrightarrow \; \rho \in \partial \,{^\epsilon}\!E_\ell(v),
\label{eq:HKLopt} 
\end{equation}
which for the regularized Hohenberg--Kohn variation principle may
now be written in the form of a stationary condition: 
\begin{equation}
\nabla\; {^\epsilon}\!F_\ell(\rho) = - v .
\label{eq:EFstat}
\end{equation}
In combination with Eq.\;\eqref{eq:Fder}, we obtain $v_\epsilon = v$ and hence 
from~Eq.\;\eqref{eq:veps} the following Hohenberg--Kohn stationary condition:
\begin{equation}
\rho = \rho_\epsilon - \epsilon v ,
\end{equation}
suggestive of an iterative scheme with the repeated calculation of the proximal density until self-consistency.

By contrast, the Lieb optimality condition $\rho \in \partial {^\epsilon}\!E_\ell(v)$ in Eq.\;\eqref{eq:HKLopt} cannot be written 
as a stationary condition since the ground-state energy ${^\epsilon}\!E_\ell$ (just like $E$ and $E_\ell$)
is differentiable only when $v$ has a unique ground-state density. From
Theorem\;\ref{thm:cmy}, we obtain 
\begin{equation}
\partial \,{^\epsilon}\!E_\ell(v) = \partial E_\ell(v) - \epsilon v ,
\end{equation}
which shows that the degeneracy of the ground-state energy is preserved by the Moreau--Yosida regularization. 

For any  $\rho \in \mathcal H_\ell$ in~Eq.\;\eqref{eq:EFstat}, 
an explicit expression for the potential $v_\epsilon$ in terms of the proximal density is given in Eq.\;\eqref{eq:veps},
yielding the regularized ground-state energy
\begin{equation}
{^\epsilon}\!E_\ell(v_\epsilon) = {^\epsilon}\!F_\ell(\rho) + (v_\epsilon \vert \rho) .
\end{equation}
Hence, \emph{for every $\rho \in \mathcal H_\ell$, there exists a potential $v_\epsilon$ for which $\rho$ is the ground-state density}. 
Stated differently, the set of ensemble $v$-representable pseudo-densities ${^\epsilon} \mathcal B_\ell$ is equal to the full Hilbert space:
\begin{equation}
{^\epsilon}\mathcal B_\ell = \mathcal H_\ell .
\label{eq:eBL}
\end{equation}
We recall that the proximal density $\rho_\epsilon$ is the exact (standard ) ground-state energy of $v_\epsilon$, see Eq.\;\eqref{eq:ELFL}.

\section{Regularized Kohn--Sham theory}
\label{sec:ks}

In the present section, we apply Moreau--Yosida regularization to Kohn--Sham theory,
beginning with a discussion of the adiabatic connection. The essential point of the regularized Kohn--Sham theory
is the existence of a common ground-state pseudo-density for the interacting and noninteracting systems,
thereby solving the representability problem of Kohn--Sham theory.

In the present section, we simplify notation by omitting the subscript that indicates the length of the box from all quantities---writing $\mathcal H$, for instance, rather
than $\mathcal H_\ell$ everywhere.

\subsection{Regularized adiabatic connection}
\label{sec:ac}

The presentation of Moreau--Yosida regularized DFT given in Section~\ref{sec:myDFT} was for the fully interacting electronic system, with
an interaction strength $\lambda = 1$ in the Hamiltonian of Eq.\;\eqref{eq:ham}. However, given that nothing in the development of 
the theory depends on the value of $\lambda$, it may be repeated without modification for $\lambda \neq 1$. 
In particular, we note that the set of ground-state pseudo-densities is equal to the whole Hilbert space and 
hence is the same for all interaction strengths, see Eq.\;\eqref{eq:eBL}. Consequently,
\emph{every $\rho \in \mathcal H$
is the ground-state pseudo-density of some $v^\lambda \in \mathcal H$, for each $\lambda$}. 

To setup the adiabatic connection, we select $\rho \in \mathcal H$.
Denoting by ${^\epsilon}\!F^\lambda: \mathcal H \to \mathbb R$ the regularized universal
density functional at interaction strength $\lambda$, we obtain from Eq.\;\eqref{eq:EFstat} the unique external potential
\begin{equation}
v_\epsilon^\lambda = - \nabla  {^\epsilon}\!F^\lambda(\rho),
\label{eq:EFstatl}
\end{equation}
for which the regularized ground-state energy at that interaction strength ${^\epsilon}\!E^\lambda : \mathcal H \to \mathbb  R$ 
is given by
\begin{equation}
{^\epsilon}\!E^\lambda(v_\epsilon^\lambda) = {^\epsilon}\!F^\lambda(\rho) + (v_\epsilon^\lambda \vert \rho) .
\label{eq:vfromF}
\end{equation}
As $\lambda$ changes, the potential $v_\epsilon^\lambda$ can be adjusted to 
setup an adiabatic connection of systems
with the same ground-state pseudo-density $\rho$ at different interaction strengths.

In the Moreau--Yosida regularized adiabatic connection, the pseudo-density $\rho$ has a proximal ground-state density that depends on $\lambda$:
\begin{align}
  \rho_\epsilon^\lambda &= \prox_{\epsilon F^\lambda}(\rho) = \rho + \epsilon v_\epsilon^\lambda,  \label{eq:pseudo-to-phys}
\end{align}
which is the true ground-state density in the potential
$v_\epsilon^\lambda$ at that interaction strength:
\begin{equation}
E(v^\lambda_\epsilon) = F(\rho_\epsilon^\lambda) + (v_\epsilon^\lambda \vert \rho_\epsilon^\lambda) .
\label{eq:ELFLx}
\end{equation}
In short, in the adiabatic connection, the effective potential $v_\epsilon^\lambda$ has the same ground-state pseudo-density $\rho$ but different
ground-state densities $\rho_\epsilon^\lambda = \rho + \epsilon v_\epsilon^\lambda$ for different interaction strengths. In the next subsection,
we shall see how this decomposition makes it possible to calculate the true ground-state energy by (regularized) Kohn--Sham theory 
in a rigorous manner, with no approximations except those introduced by domain truncation. 

\subsection{Regularized Kohn--Sham theory}
\label{sec:ks0}

Consider an $N$-electron system with external potential $v_\text{ext} \in \mathcal H$.
We wish to calculate the ground-state energy and to determine a ground-state density of this system:
\begin{equation} 
\rho \in \partial E^1(v_\text{ext}). \label{eq:Enon}
\end{equation}
This can be achieved by solving the interacting many-body Schr\"odinger equation, in some approximate manner. In Kohn--Sham theory,
we proceed differently, solving instead a noninteracting problem with the same density. 

We begin by transforming~Eq.\;\eqref{eq:Enon} into a regularized many-body energy, noting that the energy and superdifferential of the
exact and regularized ground-state energies are related according to Eqs.\;\eqref{eq:fw} and~\eqref{eq:fcs} as
\begin{align}
E^1(v_\text{ext}) &= {^\epsilon}\!E^1(v_\text{ext}) + \frac{1}{2}\epsilon\|v_\text{ext}\|_2^2, \label{eq:mye1} \\
\partial E^1(v_\text{ext}) &= \partial \,{^\epsilon}\!E^1(v_\text{ext}) + \epsilon v_\text{ext} .
\end{align}
From these relations, it follows that the pseudo-density
\begin{equation}
\rho_\text{c} = \rho - \epsilon v_\text{ext} 
\label{eq:rhoast}
\end{equation}
is a ground-state density of the regularized system:
\begin{equation}
\rho_\text{c}  \in \partial \,{^\epsilon}\!E^1(v_\text{ext}). \label{eq:RegI}
\end{equation}
The subscript `c' indicates that $\rho_\text{c}$ is the carrier density of both the physical ground-state of the system $\rho$ according to Eq.\;\eqref{eq:rhoast}
and the ground-state density of the Kohn--Sham system $\rho_\text{s}$:
\begin{equation}
\rho_\text c = \rho_\text s - \epsilon v_\text s.
\end{equation}
Our task is to determine the carrier density and regularized ground-state energy by solving~Eq.\;\eqref{eq:RegI}.
The solution will subsequently be transformed to yield the physical ground-state density and energy. 

We observe that the carrier density $\rho_\text c$ is obtained from the physical 
density $\rho$ by subtracting $\epsilon v_\text{ext}$ with $\epsilon > 0$, see Eq.~(\ref{eq:rhoast}). In practice, $v_\text{ext} < 0$
since the external potential is the attractive Coulomb potential of the nuclei.
It therefore follows that the pseudo-density is strictly positive: $\rho_\text c > 0$.

Given that $\rho_\text{c} \in \mathcal H$, there exists a Kohn--Sham potential $v_\text{s} \in \mathcal H$ such that $\rho_\text{c}$ is
the ground-state density of a noninteracting system in this potential:
\begin{equation} 
\rho_\text{c} \in \partial \,{^\epsilon}\!E^0(v_\text{s}). 
\label{eq:KSeq}
\end{equation}
To determine the regularized Kohn--Sham potential $v_\text{s}$, we first note that the potentials $v_\text{ext}$ and $v_\text{s}$ 
satisfy the stationary condition in Eq.\;\eqref{eq:EFstatl}:
\begin{align}
v_\text{ext} &= - \nabla {^\epsilon}\!F^1(\rho_\text{c}), \\
v_\text{s} &= - \nabla {^\epsilon}\!F^0(\rho_\text{c}).
\end{align}
To proceed, we next introduce
the regularized Hartree--exchange--correlation energy and potential as
\begin{align}
{^\epsilon}\!E_\text{Hxc}(\rho) &= {^\epsilon}\!F^1(\rho) - {^\epsilon}\!F^0(\rho), \\
{^\epsilon}v_\text{Hxc}(\rho) &= \nabla {^\epsilon}\!E_\text{Hxc}(\rho), 
\end{align}
yielding the following expression for the Kohn--Sham potential as a function of the density:
\begin{equation}
v_\text{s} = v_\text{ext} + {^\epsilon}\!v_\text{Hxc}(\rho_\text{c}) .
\end{equation}
To solve the regularized Kohn-Sham problem in Eq.\;\eqref{eq:KSeq}, we
first note that it is related in a simple manner to the standard
Kohn--Sham problem:
\begin{equation} 
\partial \,{^\epsilon}\!E^0(v_\text{s}) = \partial E^0(v_\text{s}) - \epsilon v_\text{s},
\label{eq:KSeqq}
\end{equation}
we then proceed in an iterative fashion.
From some trial pseudo-density $\rho_0$, we iterate 
\begin{subequations}
\begin{align}
v_{i}    &= v_\text{ext} + {^\epsilon}v_\text{Hxc}(\rho_{i-1}) \label{eq:vrho}, \\
\rho_{i} &\in \partial E^0(v_i) - \epsilon v_i, \label{eq:rhov} 
\end{align}
\end{subequations}
until convergence, beginning with $i=1$ and terminating when self-consistency has been established. 
We emphasize that the regularized Kohn--Sham iterations in Eqs.\;\eqref{eq:vrho} and~\eqref{eq:rhov}
are \emph{identical to the iterations in standard Kohn--Sham theory} except for the use of a 
regularized Hartree--exchange--correlation potential in the construction of the Kohn--Sham matrix 
and the subtraction of $-\epsilon v_i$ from the density generated by diagonalization of the resulting Kohn--Sham  matrix.

Having determined the ground-state carrier density $\rho_\text{c}$ and the corresponding Kohn--Sham potential $v_\text{s}$ 
by iterating Eq.\;\eqref{eq:vrho} and~\eqref{eq:rhov} until self consistency,
we calculate the interacting regularized ground-state energy as
\begin{equation}
\begin{split}
{^\epsilon}\!E^1(v_\text{ext}) 
&= {^\epsilon}\!F^1(\rho_\text{c}) + (v_\text{ext} \vert \rho_\text{c})  \\
&= {^\epsilon}\!F^0(\rho_\text{c}) + {^\epsilon}\!E_\text{Hxc}(\rho_\text{c}) + (v_\text{ext} \vert \rho_\text{c})  \\
&= {^\epsilon}\!E^0(v_\text{s}) + (v_\text{ext} - v_\text{s} \vert \rho_\text{c}) + {^\epsilon}\!E_\text{Hxc}(\rho_\text{c}) 
\end{split}
\end{equation}
from which the physical ground-state energy $E^1(v_\text{ext})$ is recovered by adding $\frac{1}{2} \epsilon \Vert v_\text{ext} \Vert^2_{2}$
according to Eq.\;\eqref{eq:mye1}, while the ground-state density $\rho$ is recovered 
by adding $\epsilon v_\text{ext}$ to the pseudo-density $\rho_\text{c}$ according to~Eq.\;\eqref{eq:rhoast}. 
We note that the pair $(\rho_\text{c},v_\text{s})$ is uniquely determined to the extent that $\rho$ in Eq.\;\eqref{eq:Enon} is unique; for systems with degenerate ground-state
densities, several equivalent pairs $(\rho_\text{c},v_\text{s})$ exist. 

By means of Moreau--Yosida regularization, we have thus setup Kohn--Sham theory in a rigorous manner, where the interacting and noninteracting 
ground-state densities are different (by an amount proportional to $\epsilon$) but related by the same carrier density $\rho_\text{c}$, thereby 
solving the noninteracting representability problem of standard Kohn--Sham theory. Moreover, differentiability of the regularized universal density functional
means that the potentials associated with this pseudo-density at different interaction strengths are well defined as the (negative) derivatives of the density functional. 
In the limit where $\epsilon\to 0^+$, standard Kohn--Sham theory is approached, although the limit itself is not expected to be well behaved.

\section{Conclusion}
\label{sec:conclusion}

The possibility of setting up DFT follows from the mathematical properties of the ground-state energy $E(v)$, which is continuous and concave in the external potential $v$. 
By convex conjugation, it may be exactly represented by the lower semi-continuous and convex universal density functional $F(\rho)$, 
whose properties reflect those of the ground-state energy. Unfortunately, $F(\rho)$ depends on the density $\rho$ in a highly irregular manner, being everywhere discontinuous
and nowhere differentiable. These characteristics of $F$ arise in part because $E$ is concave but not strictly concave and not supercoercive. By modifying $E$ in a
way that introduces strict concavity and supercoercivity without losing information about the electronic system, we obtain an alternative DFT, where the universal
density functional is much more well behaved, being everywhere differentiable (and therefore also continuous). This is achieved by Moreau--Yosida regularization, where
we apply convex conjugation not to $E(v)$ itself but to the strictly concave function $E(v) - \frac{1}{2}\epsilon \Vert v \Vert_2^2$, where $\epsilon > 0$. 
The resulting density functional ${^\epsilon}\!F(\rho)$ is convex and differentiable.
Standard DFT is recovered as $\epsilon\to 0^+$ but this limit need not be taken for the theory to be exact---for any chosen value of $\epsilon$, we can
perform DFT as usual; the exact ground-state energy is recovered as $E(v) = {^\epsilon}\!E(v) + \frac{1}{2}\epsilon \Vert v \Vert_2^2$. 
The only restriction on the exact theory is the truncation of the domain from $\mathbb R^3$ to a box of finite (but arbitrarily large) volume; such
a domain truncation simplifies the Moreau--Yosida formulation of DFT by introducing (reflexive) Hilbert spaces of densities and potentials. 

The densities that occur naturally in regularized DFT are not physical densities since they cannot be generated from an $N$-electron wave function in the usual manner.
Nevertheless, each `pseudo-density' $\rho$ has a clear physical interpretation: it can be uniquely decomposed as $\rho = \rho_\epsilon - \epsilon v_\epsilon$, 
where $\rho_\epsilon$ is a physical ground-state density (the `proximal density') and $v_\epsilon$ the associated potential.

This density decomposition justifies Kohn--Sham theory: a given
pseudo-density $\rho$ is uniquely decomposed as $\rho = \rho_\epsilon^\lambda - \epsilon v_\epsilon^\lambda$,
at each interaction strength $\lambda$. As $\lambda$ changes, the decomposition of $\rho$ changes accordingly. 
For the fully interacting system, $\rho = \rho_1 - \epsilon v_\text{ext}$ where $\rho_1$ is the physical ground-state density and $v_\text{ext}$ the external potential; 
for the noninteracting system, $\rho = \rho_\text s - \epsilon v_\text s$, where $\rho_\text s$ and $v_\text s$ are the Kohn--Sham density and potential,
thereby solving the noninteracting representability problem of Kohn--Sham theory. The working equations of regularized Kohn--Sham theory are essentially
identical to those of standard Kohn--Sham theory.

Here, we have considered standard Moreau--Yosida regularization. 
However, we may also consider a generalized approach, in which the regularizing term $\frac{1}{2} \epsilon || v ||^2_2$ is 
replaced by $\frac{1}{2} \epsilon || A v ||^2_2$, where the operator $A$ is chosen based on some \emph{a priori} knowledge of the desired solution. 
Indeed, some choices of $A$ result in approaches closely related to known regularization techniques, such as the Zhao--Morrisson--Parr approach~\cite{ZMP} to calculate the noninteracting universal density functional and the ``smoothing-norm'' regularization approach of Heaton-Burgess \emph{et. al.}, used both in the context of optimized effective potentials~\cite{YangWu,PRLOEPReg} 
and Lieb optimization methods~\cite{WuYang,JCPBulat,US}. These and related approaches will be discussed in a forthcoming paper. 
We expect such Moreau--Yosida techniques to be of great practical value in the implementation of procedures that attempt to determine either the ground-state energy or the universal density functional 
by direct optimization techniques using their derivatives, bearing in mind that both the derivatives and the objective functions are well defined in the regularized context.

\acknowledgments

This work was supported by the Norwegian Research Council through the
CoE Centre for Theoretical and Computational Chemistry (CTCC) Grant
No.\ 179568/V30 and the Grant No.\ 171185/V30 and through the European
Research Council under the European Union Seventh Framework Program
through the Advanced Grant ABACUS, ERC Grant Agreement No.\ 267683.

A.~M.~T.~is also grateful for support from the Royal Society University Research Fellowship scheme.

\appendix
\section{Mathematical Supplement}\label{matsup}

In this section, we review some important concepts of convex analysis
and the calculus of variations. Suggested reading for convex
analysis are van Tiel's book\cite{VanTiel} and the classic text by
Ekeland and T\'emam.\cite{EkelandAndTemam} The present article relies on
additional information gathered in the book by Bauschke and
Combettes,\cite{BauschkeAndCombettes} which focuses on the Hilbert-space formulation of convex analysis.
For functional analysis, the monograph by Kreyszig\cite{Kreyszig} is recommended.

\subsection{Convex functions}

We are here concerned with extended real-valued 
functions $f : X \rightarrow \RR \cup \{\pm\infty\}$ 
over a Banach or Hilbert space $(X,\|\cdot\|_X)$.
Note  that we define $x \pm \infty = \pm\infty$ for any $x\in \RR$, and
$x\cdot\pm\infty = \pm\infty$ for positive real numbers $x$, but that $+\infty
- \infty$ is not defined.

We recall that $X^*$, the topological dual of $X$, is the set of
continuous linear functionals over $X$: if $\varphi\in X^*$,
then $\varphi$ is a real-valued map, continuous and linear in $x\in X$. 
We denote by $\braket{\varphi,x}$ the value of $\varphi$
at $x$, except in the DFT setting, where
the notation $(\cdot|\cdot)$ is used. 
For simplicity, we assume in this section that $X$ is reflexive so that $X^{**} =
X$. Ultimately, we shall work with Hilbert spaces, which are 
reflexive Banach spaces so that $X^* = X$ by the Riesz
representation theorem of functional analysis.

Let $f : X \rightarrow \RR\cup\{+\infty\}$ be an extended-valued
function. The (effective) domain $\dom f$ is the subset
of $X$ where $f$ is not $+\infty$. The function $f$ is said to be proper if $\dom f
\neq \emptyset$.
The function $f$ is convex if, for all $x$ and $y$ in $X$, and for all
$\lambda \in (0,1)$,
\begin{equation}
  f(\lambda x + (1-\lambda)y) \leq \lambda f(x) + (1-\lambda)f(y).
  \label{eq:convexity}
\end{equation}
Note that this formula also makes sense if, say, $f(x) =
+\infty$. The interpretation of convexity is that a linear interpolation
between two points always lays on or above the graph of $f$. We say
that $f$ is strictly convex if strict inequality holds for $x \neq y$ in
Eq.\;(\ref{eq:convexity}). Moreover,
$f$ is said to be concave if the inequality is reversed 
in Eq.\;(\ref{eq:convexity}) and strict concavity is defined
similarly. 

Perhaps the most important property of a convex $f$ is that any local
minimum is also a global minimum. Moreover, if $f$ is strictly convex,
the global minimizer, if it exists, is unique. Convex optimization problems are
in this sense well behaved.

\subsection{Proper lower semi-continuous convex functions}

The minimal useful regularity of convex functions is not
continuity but lower semi-continuity. In a metric space $X$,
a function $f$ is said to be lower
semi-continuous if, for every sequence $\{x_n\}\subset X$
converging to some $x\in X$, we have
\begin{equation}
  f(x) \leq \liminf_n f(x_n).
\end{equation}
The importance of lower semi-continuity is that it guarantees the
existence of a global minimum if $A = \dom f$ is compact: $\inf_{x\in
  A} f(x) = f(x_\text{min})$ for some $x_\text{min}\in A$. For
concave functions, upper semi-continuity is the corresponding useful
notion; $f$ is upper semi-continuous if $-f$ is lower semi-continuous,
by definition.

We are particularly interested in lower semi-continuous proper convex
functions.  The set $\Gamma(X)$ is defined as consisting of all
functions that can be written in the form
\begin{equation}
  f(x) = \sup_{\alpha\in I}  \{ \braket{\varphi_\alpha, x} - g_\alpha\}
\end{equation}
for some family $\{\varphi_\alpha\}_{\alpha\in I} \subset X^*$ of dual
functions and some $\{ g_\alpha\}_{\alpha\in I} \subset \RR$. The set
$\Gamma(X)$ contains precisely all lower semi-continuous proper convex
functions on $X$ and the functions identically equal to $\pm \infty$.  In other words, $f$
is lower semi-continuous proper convex or identically equal to $\pm \infty$ if
and only if it is the pointwise supremum of a set of {continuous}
affine (``straight-line'') functions over $X$. We denote by
$\Gamma_0(X)$ all proper lower semi-continuous functions on $X$:
$\Gamma_0(X) = \Gamma(X) \setminus \{ x\mapsto-\infty, x\mapsto + \infty \}$.
It is a fact that any $f\in \Gamma(X)$ is also \emph{weakly} lower
semi-continuous.

On the dual space $X^\ast$, we denote by $\Gamma^\ast(X^\ast)$ the set of all functions that can be written in the form 
  \begin{equation}
  g(\varphi) = \sup_{\alpha\in I}  \{ \braket{\varphi, x_\alpha} - g_\alpha\} .
  \end{equation}
These functions  are precisely the weak-$*$ lower semi-continuous proper convex functions on $X^\ast$ and the improper functions $\pm \infty$.
The proper functions are $\Gamma_0^\ast(X^\ast) = \Gamma^\ast(X^\ast)
\setminus \{ \varphi\mapsto -\infty, \varphi \mapsto +\infty \}$. 

\begin{theorem}[Convex conjugates]
There is a one-to-one correspondence between the functions $f \in
\Gamma_0(X)$ and the functions $g \in \Gamma_0^\ast(X^\ast)$ given by
\begin{subequations}
  \begin{align}
    f(x) &= \sup_{\varphi\in X^*} \left( \braket{\varphi,x} - g(\varphi) \right), \\
    g(\varphi) &= \sup_{x\in X} \left( \braket{\varphi,x} - f(x)\right) .
  \end{align}
\end{subequations}
\end{theorem}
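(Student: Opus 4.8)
The plan is to show that the Legendre--Fenchel conjugation $f \mapsto f^*$, defined by $f^*(\varphi) := \sup_{x \in X}(\braket{\varphi,x} - f(x))$, furnishes the claimed correspondence, with inverse given by the analogous conjugation $g \mapsto g^*$ acting from $\Gamma_0^*(X^*)$ back to $\Gamma_0(X)$ (well defined because reflexivity gives $X^{**} = X$). The heart of the argument is the biconjugation identity $f^{**} = f$ for $f \in \Gamma_0(X)$, together with its dual counterpart $g^{**} = g$; once both hold, the two conjugation maps are mutually inverse, hence each is a bijection, which is exactly the asserted one-to-one correspondence.

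First I would verify that $f \mapsto f^*$ actually lands in $\Gamma_0^*(X^*)$. For each fixed $x \in \dom f$, the map $\varphi \mapsto \braket{\varphi,x} - f(x)$ is weak-$*$ continuous and affine, and $f^*$ is by definition their pointwise supremum over $x$; by the characterization of $\Gamma^*(X^*)$ recalled above, this places $f^*$ in $\Gamma^*(X^*)$. Properness then follows from properness of $f$: picking $x_0 \in \dom f$ gives $f^*(\varphi) \ge \braket{\varphi,x_0} - f(x_0) > -\infty$ for every $\varphi$, while the existence of at least one continuous affine minorant of $f$ (guaranteed by $f \in \Gamma(X)$) yields a $\varphi_0$ with $f^*(\varphi_0) < +\infty$. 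By symmetry, the same reasoning shows that $g \mapsto g^*$ maps $\Gamma_0^*(X^*)$ into $\Gamma_0(X)$.

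Next I would establish biconjugation. The Fenchel--Young inequality $\braket{\varphi,x} \le f(x) + f^*(\varphi)$ is immediate from the definition of $f^*$; rearranging and taking the supremum over $\varphi$ gives the easy inequality $f^{**} \le f$. For the reverse inequality I would invoke the structure theorem for $\Gamma(X)$: since $f \in \Gamma_0(X)$, we may write $f(x) = \sup_{\alpha \in I}(\braket{\varphi_\alpha,x} - g_\alpha)$, so that each affine function $\braket{\varphi_\alpha,\cdot} - g_\alpha$ is a minorant of $f$, whence $g_\alpha \ge f^*(\varphi_\alpha)$. Therefore
\begin{equation}
\begin{split}
f^{**}(x) &= \sup_{\varphi}\left(\braket{\varphi,x} - f^*(\varphi)\right) \\
&\ge \sup_{\alpha}\left(\braket{\varphi_\alpha,x} - f^*(\varphi_\alpha)\right) \\
&\ge \sup_{\alpha}\left(\braket{\varphi_\alpha,x} - g_\alpha\right) = f(x),
\end{split}
\end{equation}
giving $f^{**} = f$. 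The dual identity $g^{**} = g$ follows identically using the characterization of $\Gamma^*(X^*)$.

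I expect the main subtlety to be not the displayed chain of inequalities, which is routine, but the legitimacy of the step $g_\alpha \ge f^*(\varphi_\alpha)$---that is, the fact that $f$ coincides with the supremum of its continuous affine minorants. This is precisely where the Hahn--Banach separation theorem enters, and it is the content already packaged into the structure theorem characterizing $\Gamma(X)$ (respectively $\Gamma^*(X^*)$) as pointwise suprema of continuous affine functions. Consequently, the present theorem reduces cleanly to that characterization, and the remaining work is the bookkeeping of properness together with the symmetric treatment of the dual side under reflexivity.
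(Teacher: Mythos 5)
Your proof is correct. The paper itself offers no proof of this theorem---it is stated as a known textbook result (Fenchel--Moreau biconjugation), with the references to van Tiel, Ekeland--T\'emam, and Bauschke--Combettes doing the work---so there is no in-paper argument to compare against. Your route is the standard one and is exactly the argument the paper's setup is designed to support: because the paper \emph{defines} $\Gamma(X)$ and $\Gamma^\ast(X^\ast)$ as pointwise suprema of continuous affine functions (rather than as the lower semi-continuous proper convex functions, with the equivalence established via Hahn--Banach separation), your reduction of $f^{\ast\ast}=f$ to the estimate $g_\alpha \ge f^\ast(\varphi_\alpha)$ for each affine minorant is legitimate and essentially tautological modulo the properness bookkeeping, which you also handle correctly (nonemptiness of $\dom f$ for the lower bound, existence of an affine minorant for the upper bound). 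You are right to flag that the genuine analytic content---that a proper lower semi-continuous convex function equals the supremum of its continuous affine minorants---is hidden inside the paper's characterization of $\Gamma(X)$; a fully self-contained proof would need to supply that separation argument, but relative to the definitions as the paper states them, your argument is complete.
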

The unique function $g$ is said to be the \emph{convex conjugate} of $f$ and is denoted by $g = f^\ast$; likewise, $f = g^\ast$ is the \emph{convex conjugate} of $g$.
A pair of functions $f \in \Gamma(X)$ and $g \in \Gamma^\ast(X^\ast)$ that are each other's convex conjugates are said to be \emph{dual functions}. The
dual functions contain the same information, only coded differently: each property of $f$ is reflected, in some manner, in the properties of $f^\ast$ and vice versa.
We note the relations
\begin{equation}
f = (f^{\ast})^\ast = f^{\ast\ast}, \quad
g = (g^{\ast})^\ast = g^{\ast\ast} 
\label{eq:sbc}
\end{equation}
for functions $f \in \Gamma(X)$ and $g \in \Gamma^\ast(X^\ast)$. In
fact, the conjugation operation is a bijective map between $\Gamma(X)$
and $\Gamma^\ast(X^\ast)$, they contain precisely those functions that
satisfy the biconjugation relations in~Eq.\;\eqref{eq:sbc}. 

Because of sign conventions, we work with functions $f \in \Gamma_0(X)$ and $g \in - \Gamma_0^\ast(X^\ast)$. 
It is then convenient to adapt the notation 
\begin{subequations}
\begin{align}
  f^{\wedge}(\varphi) &= \inf_{x\in X} \left( f(x) + \braket{\varphi,x}\right), \\
  g^{\vee}(x) &= \sup_{\varphi\in X^*} \left( g(\varphi) - \braket{\varphi,x}\right) ,
\end{align}
\end{subequations}
for which $f = (f^\wedge)^\vee$ and  $g = (g^\vee)^\wedge$ hold.
In particular, in DFT as developed by Lieb, the density functional and
ground-state energy
\begin{subequations}
\begin{alignat}{2}
F &\in \Gamma_0(X), &\quad X &= L^1\cap L^3 , \\
E &\in -\Gamma^\ast_0(X^\ast), &\quad X^\ast &= L^\infty + L^{3/2},
\end{alignat}
\end{subequations}
are related as $E= F^\wedge$ and $F= E^\vee$.

\subsection{Subdifferentiation}

A dual function $\varphi\in X^*$ is said to be a \emph{subgradient} 
to $f$ at a point $x$ where $f(x)$ is finite if 
\begin{equation}
  f(y) \geq f(x) + \braket{\varphi,y-x}, \quad \forall y\in X,
\end{equation}
meaning that the affine function $y \mapsto f(x) + \braket{\varphi,y-x}$
is nowhere above the graph of $f$. The subdifferential $\partial
f(x)$ is the set of all subgradients to $f$ at $x$, see
Figure~\ref{fig:tangents}. Note that $\partial f(x)$ may be empty. The function $f$
is said to be subdifferentiable at $x\in X$ if $\partial f(x)\neq
\emptyset$.
A function $f\in\Gamma_0(X)$ has a global minimum at $x\in X$ if and only if
$0\in \partial f(x)$. Similarly $x\mapsto f(x) + \braket{\varphi,x}$ has a
minimum if and only if $-\varphi\in\partial f(x)$. A function $f \in \Gamma(X)$ is
subdifferentiable on a dense subset of its domain $\dom(f)$.

\begin{figure}
    \includegraphics{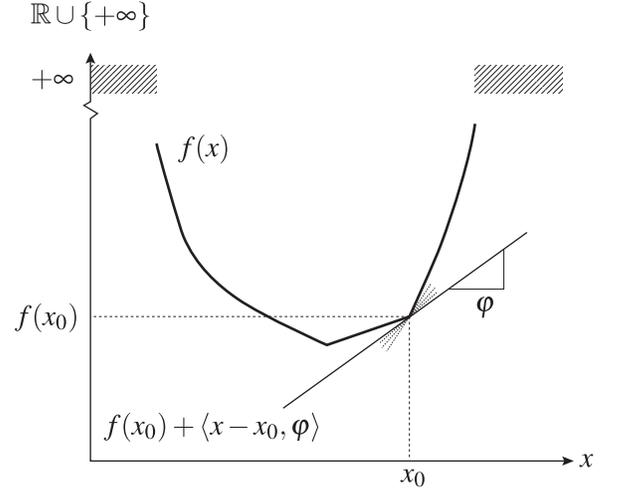}
  \caption{Illustration of the subdifferential for an 
    $f\in\Gamma_0(\RR)$. For a $x_0\in \RR$, $\partial f(x_0)$ is a
    collection of slopes of tangent functionals. One such slope
    $\varphi$ and its affine mapping is shown explicitly, the
    rest is indicated with stippled lines. $\varphi$ is not unique
    since the graph of $f$ has a ``kink'' at $x_0$.\label{fig:tangents}}
\end{figure}

In the context of DFT, $F$ is subdifferentiable at $\rho \in X$ if and only
if $\rho$ is the ground-state density of a potential $v \in X^\ast$, 
\begin{equation}
E(v) =
F(\rho) + (v|\rho) = \inf_{\rho'} \left( F(\rho') + (v|\rho') \right)
\end{equation}
so that
\begin{equation}
  F(\rho') \geq F(\rho) + (v|\rho'-\rho), \quad \forall \rho \in X.
\end{equation}
By the Hohenberg--Kohn theorem, we know that
\begin{equation}
  \partial F(\rho) = \left\{ v + \mu \; : \; \mu \in \RR\right\}
\end{equation}
if $\rho$ is $v$-representable and that $\partial F(\rho) =
\emptyset$ otherwise. Thus, from the point of view of convex analysis, the notion of
$v$-representability of $\rho$ is equivalent to subdifferentiability
of $F$ at $\rho$. It follows that the $v$-representable densities are dense in
the set of $N$-representable densities, the effective domain of $F$.

\subsection{G\^ateaux differentiability}

Let $x,y\in X$. The \emph{directional derivative} of $f$ at $x$ in the
direction of $y$ is defined by
\begin{equation}
  f'(x;y) := \lim_{\epsilon\rightarrow 0^+} \epsilon^{-1}[f(x +
  \epsilon y) - f(x)]
\end{equation}
if this limit exists ($+\infty$ is accepted as limit).
For $f\in\Gamma_0(X)$, 
the directional derivative $f'(x;y)$ always exists.

Let $x\in X$ be given. If there is a $\varphi \in X^*$ such that
\begin{equation}
  f'(x;y) = \braket{\varphi,y}, \quad \forall y \in X
\end{equation}
then $f$ is said to be \emph{G\^ateaux differentiable} at $x$. In other
words, a function is  G\^ateaux differentiable if its various directional derivatives may be assembled into a
linear functional at $x$. 
The G\^ateaux derivative is the usual notion of functional derivative
encountered in the calculus of variations, for which we write
$\varphi = \delta f(x)/\delta x$.

If $f$ is continuous and has a unique subgradient at $x$, then it is
is also G\^ateaux differentiable at $x$; the converse statement is
also true, but note that a unique subgradient alone is not enough to ensure 
G\^ateaux differentiability: continuity is not implied by a unique
subgradient.

\subsection{Fr\'echet differentiability}

A stronger notion of differentiability is given by the Fr\'echet
derivative. Let $x\in X$. If there exists $\varphi\in X^*$ such that
for all sequences $h_n\rightarrow 0$ in $X$ as $n \to \infty$, 
\begin{equation}
\lim_{n \to \infty} \frac{|f(x + h_n) - f(x) - \braket{\varphi,h_n}|}{\|h_n\|}  = 0,
\end{equation}
then $f$ is \emph{Fr\'echet differentiable} at $x$, and $\nabla f(x) = \varphi$ is
the Fr\'echet derivative.

Clearly, Fr\'echet differentiable implies 
G\^ateaux differentiable, but not the other way around. In fact, if
$\nabla f(x)$ exists at $x$, then
\begin{equation}
  f(x+h) = f(x) + \braket{\nabla f(x),h} + o(\|h\|),
\end{equation}
so that $f$ is approximated by its linearization around $x$. This is not
true if $f$ is merely G\^ateaux differentiable.

\bibliographystyle{aipnum4-1}

%

\end{document}